\newcommand{\helios}{\textsc{Equinox}}

\newcommand{\niparagraph}[1]{\vspace{1.5mm}\noindent\textbf{#1}\enspace}

\documentclass[sigplan,10pt,nonacm]{acmart}
\renewcommand\footnotetextcopyrightpermission[1]{}
\AtBeginDocument{%
  }

\usepackage[table]{xcolor}
\usepackage{enumitem}
\usepackage{booktabs}
\usepackage{pgfplots}
\pgfplotsset{compat=1.18}
\usepackage{tikz}
\usepackage{microtype}
\usepackage{xspace}
\usepackage[ruled,linesnumbered]{algorithm2e}
\usepackage{balance}
\usepackage{gensymb}
\usepackage{subcaption}

\setcopyright{acmlicensed}
\copyrightyear{2018}
\acmYear{2018}
\acmDOI{XXXXXXX.XXXXXXX}
\acmConference[Conference acronym 'XX]{Make sure to enter the correct
  conference title from your rights confirmation email}{June 03--05,
  2018}{Woodstock, NY}
\acmISBN{978-1-4503-XXXX-X/2018/06}

\author{Ansel Kaplan Erol}
\email{aerol3@gatech.edu}
\affiliation{%
   \institution{Georgia Institute of Technology}
   \city{Atlanta}
   \state{Georgia}
   \country{USA}
}
\author{Divya Mahajan}
\email{divya.mahajan@gatech.edu}
\affiliation{%
   \institution{Georgia Institute of Technology}
   \city{Atlanta}
   \state{Georgia}
   \country{USA}
}

\begin{document}

\title{\helios: Decentralized Scheduling for Hardware-Aware Orbital Intelligence}

\renewcommand{\shortauthors}{A. Erol \& D. Mahajan}


\begin{abstract}
Earth-observation satellites are emerging as distributed edge computing platforms for time-critical tasks such as disaster response and maritime intelligence, carrying onboard accelerators that enable in-orbit inference. Scheduling these workloads fundamentally differs from datacenter scheduling: energy is harvested intermittently, and decisions are temporally coupled where processing an image eagerly can deplete battery reserves needed minutes later for higher-value events. Existing schedulers rely on static priorities or resource thresholds and lack mechanisms to adaptively shed lower-value work as constraints tighten.

We present \helios, a lightweight, decentralized runtime for the strict constraints of low-power orbital edge systems. \helios\ enables adaptive scheduling under complex conditions by compressing heterogeneous and time-varying hardware constraints such as battery charge, thermal headroom, and queue backlog into a single state-dependent marginal cost of execution. Derived from a barrier function that rises sharply as system state approaches safety limits, this cost encodes both instantaneous resource pressure and future risk. 
This local cost signal also serves as a constellation-wide coordination primitive. Tasks execute only when their value exceeds the current cost, producing value-ordered load shedding without explicit prioritization policies. Furthermore, if local execution costs exceed those of a neighbor, tasks are dynamically offloaded over inter-satellite links. This achieves distributed load balancing without routing protocols, queue exchange, or global state. We show this single signal is sufficient to align scheduling decisions across a disconnected, resource-constrained system.

We evaluate \helios\ using a hardware-validated, multi-day simulation of a 143-satellite constellation, grounded in measurements from a physical Jetson Orin Nano Super. \helios\ improves scientific goodput by 20\% and image-processing throughput by 31\% over priority-based scheduling, while maintaining 2.2$\times$ higher mean battery reserves. Under increasing demand, \helios\ gracefully sheds work to achieve 5.2$\times$ the execution rate of static scheduling rather than collapsing under contention.
\end{abstract}

\keywords{orbital edge scheduling, resource management, load balancing, inter-satellite links, energy harvesting}

\settopmatter{printfolios=true}
\maketitle
\pagestyle{plain}

\section{Introduction}


\begin{sloppypar} Modern Low-Earth Orbit (LEO) constellations leverage machine learning accelerators capable of running models for wildfire detection, flood mapping, and maritime surveillance directly in orbit, eliminating the multi-hour latency of downlink-then-analyze pipelines \cite{orbitaledge, spire2025wildfiresat, pelican}. The energy budget powering these inferences, however, is physically coupled to sunlight. Satellites spend roughly 35\% of each orbit in Earth's shadow, known as eclipse, operating entirely on a 50--100 Wh battery \cite{powerbudgets}. A scheduler that processes at full rate during sunlight risks exhausting those reserves before eclipse ends, which could leave no resources available at the moment the satellite passes over a region of interest. Satellites image continuously, but onboard compute can service only a fraction of arriving events; the scheduler must triage, deciding which images are worth processing at all, not merely in what order.
\end{sloppypar}

A satellite's onboard compute is fixed at launch within a rigid mass and power envelope, motivating efficient edge accelerators like the NVIDIA Jetson~\cite{jetsonorin} and Google Edge TPU~\cite{coral} for large-scale industry OEC initiatives~\cite{pelican, aguera2025suncatcher}. Compute capacity cannot be provisioned on demand. Energy arrives only during sunlit phases and cannot be replenished. These properties create an \emph{inter-temporal coupling}: processing an image now consumes energy that may be needed minutes later, during eclipse, for a detection of far greater value. Delaying or dropping work risks missing time-critical phenomena. Figure \ref{fig:intro} illustrates this challenge.

\begin{figure*}
\includegraphics[width=0.8\linewidth]{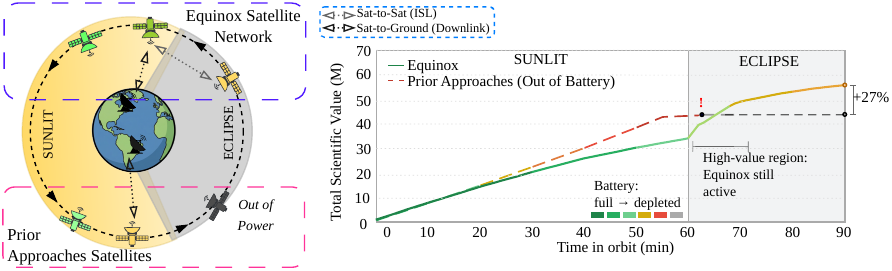}
\caption{Scheduling for OEC satellites must maintain processing power for the 30\% of orbit spent in eclipse.}
\label{fig:intro}
\Description{Diagram showing Earth and atmosphere with satellites executing tasks in orbit. Above, a bar depicts the Nexus runtime as a central box mediating between incoming tasks on one side and orbital hardware constraints on the other.}
\end{figure*}

The scheduling problem is further complicated by the infeasibility of centralized control. A satellite is visible to a ground station for only minutes per orbit. For the remainder, including most of eclipse, it operates entirely without contact, making it impractical to delegate scheduling decisions to the ground. The runtime must act from local hardware state alone, with no global view and no coordinator. This combination of non-replenishable energy, intertemporal coupling, and enforced autonomy places satellite scheduling closer to real-time control under physical resource limits than to traditional workload management. 

Existing approaches address only one dimension of this problem. Value-based systems \cite{serval, erol2025earthsight} rank the tasks by expected utility, thus prioritizing certain tasks before others but ignoring hardware state. 
These systems greedily excecute the highest priority work available at time $t$. However, by ignoring future resource constraints, they risk depleting battery reserves needed for higher-value inference tasks that arrive later, resulting in suboptimal performance over the full orbit.
%
%
Resource-based systems \cite{liu2024phoenix, li2025orbitchain} enforce battery or thermal limits and may redistribute load via inter-satellite links (ISLs), but treat all tasks as uniform importance, lacking a mechanism to preferentially retain higher-value work under contention. \textit{Thus, no existing system jointly accounts for task value and time-varying resource constraints in a decentralized setting.}

We present \helios, a decentralized runtime for resource-constrained orbital edge systems. Solutions that reason over the full resource-task state space exceed the computational budget of satellites already saturated by inference, and ground-directed systems cannot adapt in time as contact windows span only minutes per orbit. Our key insight for overcoming this challenge is that heterogeneous and time-varying hardware constraints such as battery charge, thermal headroom, and queue backlog, can be compressed into a single \emph{state-dependent marginal cost of execution}. This cost is derived from a barrier function that rises sharply as system state approaches safety limits, encoding both instantaneous resource pressure and the risk of future depletion. Tasks execute only when their value exceeds this cost, producing value-ordered load shedding without explicit priority policies.

This formulation yields two important properties. First, it induces \emph{staggered dropout}: as resources tighten, tasks are rejected in strict order of increasing value, leaving the highest-value work as the last to lose access. Second, the cost signal also serves as a coordination primitive across satellites. When the cost of execution on the local node exceeds that of a neighbor (plus communication cost), tasks are offloaded over inter-satellite links. This enables distributed load balancing without routing protocols, queue exchange, or global state; the cost signal is sufficient to guide both scheduling and placement decisions.

\helios\ is the first decentralized runtime to unify value-aware scheduling with state-dependent resource control in orbital edge systems. Our contributions are as follows:

\begin{itemize}[nosep,leftmargin=*]
\item We introduce a scheduling framework that compresses battery, thermal, and queue state into a single state-dependent marginal cost, enabling value-aligned execution without relying solely on task priorities or resource accounting.
\item We design a barrier-based cost function that enforces safety constraints by construction and induces value-ordered task rejection under arbitrary arrival sequences.
\item We show that this cost signal serves as a unified coordination primitive, enabling inter-satellite load balancing without routing protocols or global state.
\item We propose a constellation-level evaluation framework spanning scientific value and goodput, event detection coverage, time with battery reserves , brownout risk, and scientific load balance.
\end{itemize}

We evaluate \helios\ via a 143-satellite simulation using the fMoW dataset \cite{christie2018fmow}, anchored by hardware profiling on an NVIDIA Jetson Orin \cite{jetsonorin}. Our central finding is that strategically shedding lower-value tasks to preserve battery headroom directly maximizes long-term scientific throughput, particularly during energy-starved eclipse phases where greedy schedulers predictably fail.
By sustaining operations through eclipse transitions, \helios\ extracts 20\% higher scientific value per hour than priority baselines (+64\% vs. FIFO) and processes 31\% more images, all while maintaining 2.2$\times$ greater mean battery reserves. Under severe contention (16 tasks), \helios\ sustains 5.2$\times$ the execution rate of static scheduling by shedding marginal work rather than collapsing. Furthermore, we show \helios's marginal cost formulation allows operators the flexibility to shift \helios\ along the scientific throughput--battery health trade-off to match mission requirements.\footnote{Anonymized code: \url{https://github.com/equinox-authors/equinox}}
\section{Background}
\label{sec:background}

\subsection{Orbital Edge Computing}

Early Earth observation platforms such as Sentinel and Aqua transmitted raw sensor data to ground stations for offline processing \cite{sentinel_data, aqua_cer_ssf}. This \textit{bent-pipe} architecture is fundamentally constrained by short contact windows (typically under 10 minutes per orbit), shared downlink bandwidth, and the sheer volume of imagery generated by modern constellations \cite{l2d2, kodan}. As a result, time-critical events such as wildfires, floods, and maritime incidents cannot be processed in time when relying on ground-side analysis.

Modern constellations address this limitation by moving inference onboard the satellite. Nanosatellites equipped with commercial off-the-shelf accelerators \cite{spireconstellation, planetconstellations, jetsonorin} execute models in orbit and transmit only detections or compressed summaries. This paradigm, Orbital Edge Computing (OEC)~\cite{orbitaledge}, reduces downlink demand and enables real-time response at constellation scale. Commercial deployments are already underway: Spire Global performs in-orbit wildfire detection \cite{spire2025wildfiresat}, Planet integrates NVIDIA Jetson accelerators into its Pelican satellites \cite{pelican}, and Google's Project Suncatcher explores general-purpose orbital inference \cite{aguera2025suncatcher}.

\begin{figure}
\includegraphics[width=0.7\columnwidth]{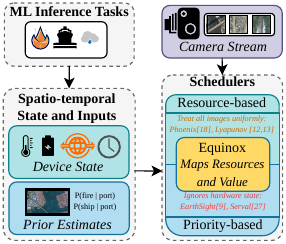}
\caption{Context of \helios\ to Earth-observation alternatives. Only \helios\ incorporates both hardware state and image priors to schedule Earth Observation workloads.}
\label{fig:comparison}
\vspace{-2ex}
\end{figure}

\subsection{The Orbital Scheduling Problem}

Moving inference in-orbit replaces the downlink bottleneck with a tightly coupled resource management problem. Unlike datacenters, both compute and energy are severely constrained and evolve over time. Energy availability follows the orbital cycle, increasing during sunlight and depleting during eclipse, while compute demand varies with the spatiotemporal distribution of incoming imagery and task complexity. As a result, resource availability and workload demand are continuously changing, creating a scheduling problem with strong intertemporal dependencies.

\niparagraph{Compute oversubscription.}
Onboard accelerators are orders of magnitude weaker than datacenter hardware. Meanwhile, high-resolution sensors generate imagery at rates that exceed what can be processed in real time, creating a persistent compute deficit. As a result, the scheduler must continuously decide which images to process, which models to apply, and which tasks to defer or drop.

\niparagraph{Intermittent energy and eclipse.}
Satellites harvest solar power during sunlight for roughly 65\% of each orbit and rely on batteries for the remaining 35\% \cite{liu2024phoenix} of the orbit during Eclipse. With only 50--100\,Wh \cite{powerbudgets}, yet decisions that need to be made are intertemporal: executing a task now will consume energy that may be required minutes later during eclipse. A scheduler that runs aggressively during sunlight risks depleting reserves and becoming unavailable when high-value events occur. Maintaining sufficient battery headroom is therefore essential not only for safety, but for sustained throughput.

\niparagraph{Thermal limits.}
Thermal constraints further restrict sustained computation. In the absence of convective cooling, heat must be dissipated via radiation, and continuous inference drives accelerators toward thermal throttling limits. Thermal state is further coupled with the solar cycle: sunlit satellites absorb external radiation while generating compute heat, reducing available headroom.

\niparagraph{Scientific value heterogeneity.}
Workloads are highly heterogeneous in value. An image over open ocean may carry negligible utility, while one over a dense port or wildfire region may be highly valuable. Uniformly allocating compute under scarcity wastes resources on low-value inputs and risks missing high-value events when constraints tighten.

\niparagraph{Orbital dynamics and spatial heterogeneity.}
Beyond per-satellite constraints, orbital scheduling is inherently distributed.
Resource availability and workload demand are not only time-varying but also spatially distributed across the constellation. At any given time, satellites occupy different phases of the orbital cycle: some are in sunlight with abundant energy, while others are in eclipse operating on limited battery reserves. Similarly, observation opportunities are unevenly distributed across the Earth's surface, with crucial events concentrated in specific regions and at specific times.

This spatial and temporal heterogeneity creates opportunities for cooperation across satellites. A satellite in eclipse may be energy-constrained while a neighboring satellite in sunlight has surplus capacity. However, exploiting this asymmetry requires transferring tasks across inter-satellite links (ISLs), introducing additional latency and energy costs. Scheduling decisions must therefore account not only for local resource state, but also for the relative resource availability of neighboring nodes and the cost of offloading work.

\subsection{The Gap in Existing Approaches}

These constraints create a scheduling problem that existing systems address only partially (Figure \ref{fig:comparison}). Scientific value-aware systems \cite{serval, erol2025earthsight} prioritize tasks by expected utility, but ignore resource state, executing moderate-value work even when doing so consumes energy needed for higher-value events arriving shortly thereafter. Resource-aware systems \cite{kodan, hu2023lyapunov, zhang2024peer, li2025orbitchain, liu2024phoenix} enforce battery or thermal limits and may redistribute load via inter-satellite links (ISLs), but treat tasks uniformly, lacking a mechanism to preferentially retain higher-value work under contention.

ISLs provide a mechanism to exploit this spatial heterogeneity by redistributing work across satellites. However, they introduce an additional layer of complexity: deciding not only whether a satellite has spare capacity, but \emph{whether a specific task should be offloaded}. This requires jointly reasoning about local resource pressure, the value of the task, the neighbor's available capacity, and the cost of transfer. Existing approaches capture either capacity or value, but not both, and therefore cannot make value-aware offloading decisions under resource constraints.

What is missing is a unified abstraction that encodes system state in a form that can drive both scheduling and routing decisions. Such an abstraction must reflect current resource pressure, anticipate future constraints, and allow tasks to be selected based on their value relative to system state. \helios\ provides this abstraction through a state-dependent marginal cost signal that enables both value-aware scheduling and decentralized coordination.

\section{\helios\ System}
\label{sec:system}

\subsection{Overview}
\label{sec:system:overview}

\helios\ is a decentralized runtime for distributed resource management in orbital edge systems. It is motivated by a key challenge: that scheduling decisions in an OEC setting must simultaneously account for time-varying resource constraints, heterogeneous scientific task values, and distributed execution across satellites. However, existing approaches treat these concerns in isolation.

\textbf{\textit{\helios\ addresses this challenge through a single unifying abstraction: a state-dependent marginal execution cost that captures the instantaneous pressure on a satellite’s resources.}}
Rather than separating prioritization, scheduling, and routing into distinct mechanisms, \helios~reduces all decisions both local and global, to a single comparison: task value versus marginal execution cost. This simple, low-dimensional signal is critical for edge deployment. Scheduling algorithms that optimize over the full resource-task state space are computationally out of reach for satellite-class hardware already saturated by inference workloads. Ground-directed control is equally impractical: contact windows span only minutes per orbit, making reactive teleoperation too slow to adapt to eclipse transitions or burst arrivals. A single closed-form scalar computed from local sensor readings provides the necessary control without competing with inference for CPU headroom. 

This signal compresses battery state, thermal headroom, and queue congestion into a single control scalar that reflects current utilization and proximity to critical limits. Thus, \helios\ selects tasks based on their predicted scientific value relative to resource pressure, rather than through separate priority and resource management mechanisms.
Each satellite independently computes this marginal cost and uses it as a local control signal. There is no centralized controller, and coordination emerges implicitly through comparisons of cost across satellites.

As shown in Figure~\ref{fig:nexus_architecture}, the \helios\ runtime continuously executes a four-step loop: (1) computes the marginal execution cost from local sensor readings; (2) evaluates incoming tasks to estimate their expected scientific value, based on event likelihood and importance weighting; (3) executes tasks whose expected science value exceeds the local cost; and (4) compares costs with neighbors via inter-satellite links (ISLs) to offload tasks to cheaper nodes, dropping or deferring the rest.
By strictly decoupling tasks (which act as hardware-agnostic value estimators) from the system (which exposes only the execution cost), \helios\ replaces explicit priority queues, hand-tuned thresholds, and resource-specific policies with a unified, elegant decision rule.


\begin{figure*}
    \centering
    \includegraphics[width=0.92\linewidth]{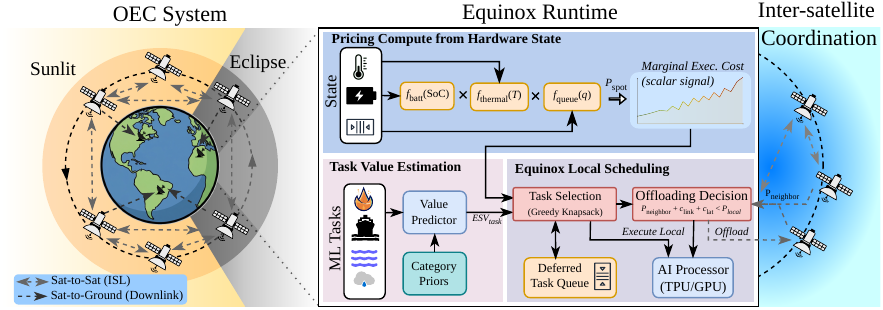}
    \caption{Overview of the \helios~framework, illustrating per-satellite marginal execution cost, per-satellite local scheduling, and ISL arbitrage components within a decentralized runtime for distributed resource management in orbital edge systems.}
    \label{fig:nexus_architecture}
\Description{Architecture diagram showing per-satellite spot pricing, task bidding, greedy knapsack clearing, and ISL arbitrage components.}
\end{figure*}

\subsection{Marginal Execution Cost}
\label{sec:system:pricing}

\helios\ represents system state of each satellite through the marginal execution cost $P$, defined as the incremental cost of processing one additional task at the current hardware state. This signal unifies multiple resource constraints into a single control variable that can drive both scheduling and routing decisions.
The marginal cost increases as any resource approaches its limit, encoding not only current utilization but also proximity to unsafe or unsustainable operating regimes. Tasks are admitted when their expected value exceeds this cost, and rejected or deferred otherwise.

\helios\ decomposes marginal execution cost into independent factors: \textbf{battery state, thermal headroom, and queue congestion.} Each factor captures how close the system is to a limiting condition along a single resource dimension.
We combine these factors multiplicatively so that pressure on any one resource can dominate the overall cost. This ensures that no single resource (e.g., a healthy battery) can mask risk in another (e.g., thermal limits), and that the cost rises sharply as the system approaches any critical boundary. The resulting execution cost formulated by \helios\ is:

\begin{equation}\footnotesize
  P = P_{\text{base}} \cdot f_{\text{batt}}(\text{SoC}) \cdot f_{\text{thermal}}(T) \cdot f_{\text{queue}}(q)
  \label{eq:spot_price}
\end{equation}

\noindent where $P_{\text{base}}$ is the nominal cost under standard conditions. Calibration of coefficients and hyperparameters is deferred to Section~\ref{sec:eval:setup} and Appendix~\ref{app:calibration}. Importantly, the sensitivity coefficients $\beta$ and $\gamma_q$ are operational knobs: increasing $\beta$ enforces stricter energy conservation, improving battery reserve time at the cost of throughput, while smaller values favor processing under tighter energy margins. This allows operators to tune \helios\ along the scientific throughput--battery health trade-off to match mission priorities without modifying the control structure (Figure~\ref{fig:pareto}, Table~\ref{tab:all_results}).

\niparagraph{Battery factor.}
Battery state governs the system’s ability to survive future execution, particularly across eclipse. Unlike other resources, energy is not continuously replenished, and mistakes are irreversible over the short horizon of an orbit. The battery factor therefore encodes proximity to energy depletion as a hard constraint.
We model this using a barrier function around the critical state of charge $\text{SoC}_c$.
\begin{equation}\footnotesize
  f_{\text{batt}}(\text{SoC}) = 1 + \frac{\beta}{(\text{SoC} - \text{SoC}_c)^2}
  \label{eq:f_batt}
\end{equation}

As $\text{SoC} \rightarrow \text{SoC}_c^+$, the marginal cost diverges, forcing the system to shed load before running out of power. This captures a key asymmetry: while tasks can be deferred, energy shortfall during eclipse cannot be corrected in real time. We calibrate $\beta$ such that this behavior activates early enough to preserve power for an eclipse interval.

\niparagraph{Thermal factor.}
Thermal state limits how long computation can be sustained. Unlike battery constraints, which are driven by future availability, thermal limits arise from instantaneous power density and the inability to dissipate heat efficiently in orbit.
We encode this using a soft barrier as temperature approaches the throttle threshold $T_{\max}$:
\begin{equation}\footnotesize
  f_{\text{thermal}}(T) = 1 + \frac{\gamma_T \cdot (T - T_0)^2}{(T_{\max} - T)^2 + 1}
  \label{eq:f_thermal}
\end{equation}

where $T_0$ is the nominal operating temperature, $T_{\max}$ is the thermal throttle threshold, and $\gamma_T$ is a sensitivity coefficient controlling how steeply cost rises with thermal headroom loss. This factor remains near one under nominal conditions, but rises sharply as thermal headroom collapses. Its role is preventive, avoiding schedules that appear feasible in the short term but would trigger throttling and reduce effective throughput, especially during sunlit phases when both external heating and compute load are high.

\niparagraph{Queue factor.}
Queue factor captures mismatch between input rate and processing capacity. While battery and thermal factors are physical limits, the queue factor captures system-level congestion.
We model this as a linear function of deferred workload:
\begin{equation}\footnotesize
  f_{\text{queue}}(q) = 1 + \gamma_q \cdot q
  \label{eq:f_queue}
\end{equation}

\noindent where $q$ is the number of images in the deferred queue and $\gamma_q$ is the congestion sensitivity coefficient.
As backlog grows, the marginal cost increases, biasing the system towards higher-value tasks and preventing accumulation of low-value work. Unlike the barrier functions, this term provides continuous back-pressure rather than enforcing a hard limit.
$P$ represents the current cost of executing one additional task. Tasks with expected value above $P$ are admitted, while those below $P$ are deferred or dropped.

\subsection{Per-Satellite Scheduling and Resource Allocation}
\label{sec:system:bidding}

\helios\ operates as a continuous decision process in which tasks compete for execution based on their expected value relative to current system cost. Rather than separating prioritization, scheduling, and routing into distinct mechanisms, the system reduces all decisions to a single comparison: task value versus marginal execution cost.

\niparagraph{Value representation.}
Each task $t$ corresponds to a detection objective and expresses the expected scientific value of processing an image as a scalar:
\begin{equation}\footnotesize
  ESV_{t} = P(\text{event}_t \mid \text{category}) \cdot \text{acc}_t \cdot w_t
  \label{eq:esv}
\end{equation}
where $P(\text{event} \mid \text{category})$ captures the likelihood of relevance to an image's region category, $\text{acc}_t$ reflects model capability, and $w_t$ encodes the scientific importance of task $t$. This formulation separates opportunity, capability, and importance, allowing heterogeneous tasks to be compared on a common scale without exposing its internals.

\niparagraph{Admission and Resource-constrained Selection.}
At runtime, tasks are admitted only if their value exceeds the current execution cost $P$. This condition acts as a global, state-dependent threshold: as resource pressure increases, fewer tasks remain economically viable. Crucially, this replaces explicit priority queues and hand-tuned policies with a single, consistent decision rule that adapts automatically to system conditions. Admitted tasks compete for limited compute and energy resources. Because the system operates under persistent oversubscription, selection is unavoidable even in nominal conditions. The runtime therefore chooses a subset of tasks that maximizes total expected value under the instantaneous compute budget:
\begin{equation}\footnotesize
  \max \sum_{i \in S} ESV_i \quad \text{s.t.} \quad \sum_{i \in S} g_i \leq G
  \label{eq:knapsack}
\end{equation}
where $g_i$ denotes the compute demand of task $i$. We approximate this using a value-density ordering $V_i / g_i$, reflecting the system’s objective of extracting the most value per unit of constrained resource. 
The marginal cost $P$ plays a critical upstream role: by filtering out low-value tasks before selection, it reduces problem complexity and ensures that the remaining competition is focused on high-impact work.

\niparagraph{Temporal adaptation.}
Tasks that are not selected are not immediately discarded. Instead, they are deferred and reconsidered as system conditions evolve. Because $P$ varies over time with battery, thermal, and queue state, tasks that are uneconomical under current conditions may become viable later, for example, after exiting eclipse or clearing congestion.
This mechanism effectively shifts demand across time without requiring tasks to predict future conditions. The system, rather than the task, absorbs temporal uncertainty.

\niparagraph{Graceful degradation.}
\label{sec:system:dropout}
As resource pressure increases, the system does not fail abruptly. Instead, rising marginal cost progressively filters out lower-value tasks while preserving higher-value ones. 
This behavior can be characterized analytically. A task $a$ ceases to execute when $P > V_a$, which defines a critical state of charge $\text{SoC}_a^*$:
\begin{equation}\footnotesize
  \text{SoC}_a^* = \text{SoC}_c + \sqrt{\frac{\beta \cdot P_{\text{base}}}{ESV_{task} - P_{\text{base}}}}
  \label{eq:dropout_soc}
\end{equation}
Because $\text{SoC}_a^*$ decreases monotonically with $V_a$, tasks exit the system in strict value order. High-value tasks persist deeper into constrained regimes, while lower-value tasks are shed earlier. 
This value-ordered degradation is not explicitly programmed; it directly emerges from the cost formulation, ensuring that the system remains aligned with scientific priorities even under severe resource constraints.

\subsection{Satellite Constellation Load Redistribution}
\label{sec:system:isl}
Orbital dynamics introduce persistent asymmetries in cost. Satellites in eclipse experience rising marginal cost due to energy constraints, while sunlit satellites operate at lower cost. 
Thus, the same value--cost abstraction extends naturally across satellites. Each node exposes its marginal execution cost, and tasks are routed toward locations where they can be executed more efficiently.

A task is offloaded when a neighboring satellite offers lower effective cost after accounting for transfer overhead:
\begin{equation}\footnotesize
  P_{neighbor} + c_{\text{link}} + c_{\text{lat}} < P_{local}
\end{equation}  
\noindent where $c_{\text{link}}$ is the ISL transmission cost and $c_{\text{lat}}$ is a penalty for routing latency. 
As a result, load redistribution is a process of cost equalization across the constellation, rather than explicit load balancing. This process is entirely decentralized: no satellite maintains a global view or explicitly balances load.
These gradients naturally drive a flow of tasks toward energy-rich regions, causing workload to follow the solar cycle without explicit coordination or eclipse-aware logic.

Routing decisions rely only on observable hardware state and exclude queue congestion that is not live across nodes:
\begin{equation}
P_n^{\text{phys}} = P_{\text{base}} \cdot f_{\text{batt}}(SoC_{neighbor}) \cdot f_{\text{therm}}(T_{neighbor}).
\end{equation}
This ensures that offloading decisions remain stable and grounded in shared physical signals. Because routing is governed by the same value–cost comparison as local scheduling, it preserves value awareness: only tasks that remain worthwhile after transfer are offloaded. Unlike capacity-based schemes that redistribute load uniformly, \helios\ \textit{preserves value-awareness during routing.} Offloading is only beneficial if the task remains worth executing at the destination after accounting for transfer costs. As a result, high-value tasks are preferentially redistributed under contention, while low-value tasks are filtered out locally.

Overall, \helios~extends the scheduling problem from a single node to a distributed system. Locally, scheduling selects tasks that maximize value under resource constraints; globally, it determines where in the constellation that value should be realized. The marginal execution cost serves as a common currency across satellites, allowing each node to make independent decisions that collectively approximate a system-wide allocation of work. In this view, offloading is not a separate mechanism but a continuation of scheduling across nodes: tasks are placed where their value can be realized at lowest cost.

\subsection{Satellite Constellation Performance Metrics}
\label{sec:system:metrics}

\helios\ makes decisions using a scalar control signal, the marginal execution cost, but evaluating its behavior requires a broader view. No single metric can capture the trade-offs inherent in orbital systems, where value, resource availability, and hardware health are tightly coupled.
Standard performance metrics such as throughput are insufficient: they ignore resource constraints, system safety, and the value of observations. A throughput-maximizing scheduler may exhaust battery reserves and miss critical events during eclipse, while an overly selective system may achieve high per-image quality but fail to capture rare events. 
We therefore evaluate how a simple, low-dimensional control signal translates into system-level behavior using a set of complementary metrics spanning value, efficiency, and operational readiness.

\niparagraph{Scientific Value and Goodput.}
Scientific Value (SV) captures the total realized scientific utility, i.e., the weighted and summed value of detected events:
\begin{equation}\footnotesize
  \text{SV} = \sum_{i \in \mathcal{E}} \sum_{t \in \mathcal{T}_i}
  \mathbf{1}[\text{event}_a \in \text{content}(i)] \cdot w_t
  \label{eq:sv}
\end{equation}
\noindent where $\mathcal{E}$ is the set of executed images and $\mathcal{T}_i$ is the set of tasks applied to image $i$. SV weights detections by their scientific importance; for example, fire detection is more time-critical than routine environmental monitoring. Task-specific weights are provided in Section~\ref{sec:eval:setup}.

Moreover, Scientific Goodput (SG) measures the rate at which scientific value is produced, i.e., scientific value per hour. Unlike raw throughput, the Scientific Goodput rewards systems that sustain high-value processing over time, and therefore captures both selectivity and sustained execution.

\niparagraph{Event Coverage.}
Event Coverage measures the fraction of observable events that are actually detected. This metric ensures that selectivity does not come at the cost of missing rare but important events. A system with high image throughput but no value-aware selection may achieve high coverage even with low per-image utility, while a highly selective system may achieve strong per-image utility but low overall coverage if it processes too few images.

\niparagraph{Scientific Load Balance.}
Scientific Load Balance measures how evenly scientific value is distributed across each satellite in the constellation:
\begin{equation}\footnotesize
  \text{LoadBalance} = \left(1 - \frac{\sum_{i=1}^{N}\sum_{j=1}^{N} |v_i - v_j|}{2N \sum_{i=1}^{N} v_i}\right)\times 100\%
  \label{eq:loadbalance}
\end{equation}
where $v_i$ is the total Scientific Value accumulated by satellite $i$ over the simulation. A value of $100\%$ indicates perfectly uniform distribution across satellites, while $0\%$ indicates complete concentration in a single satellite. Higher values are better. This formula is based on the Gini index \cite{gini1921measurement}.

\niparagraph{Battery Reserve Time and Brownout Risk.}
Battery Reserve Time captures operational readiness: the fraction of satellite-timesteps for which battery is above the safe operating threshold
$\text{SoC}_{\text{safe}} = 0.35$. Satellites above this threshold can perform unexpected high-value events without immediate energy stress.

\niparagraph{Brownout Risk.} Brownout Risk measures the fraction of satellite timesteps in which battery level falls below the near-critical  threshold:
$\text{SoC}_{\text{brn}} = 0.20$. Lower brownout risk is better. In this regime, a satellite has little usable reserve above the hard cutoff ($\text{SoC}_c = 0.15$), making it vulnerable to missed detections and unable to absorb additional load.

\textit{Together, these metrics evaluate the system along three axes: value (Scientific Value and Goodput, Event Coverage), operational health (SG, Brownout Risk), and constellation-wide scientific load balance (Scientific Load Balance).}
\section{Evaluation}
\label{sec:eval}

\subsection{Experimental Setup}
\label{sec:eval:setup}

We evaluate \helios\ with a hardware-augmented satellite constellation simulator that combines validated orbital dynamics with measurements from a real edge platform. Rather than relying on abstract resource models alone, we profile a physical Jetson Orin (Section~\ref{sec:eval:hw}) and use those measurements to instantiate the simulator’s compute, energy, and physical parameters (battery and thermals). 

\niparagraph{Constellation setting.} The orbital environment is modeled using standard, validated astrodynamics components: SGP4 for orbit propagation~\cite{vallado2006}, Skyfield for eclipse geometry~\cite{rhodes_skyfield}, and SciPy for pairwise ISL distance computation~\cite{scipy2020}, all widely used in prior related work~\cite{orbitaledge, liu2024phoenix}. 
The overall simulation encompasses 143 LEO satellites arranged in 13 orbital planes of 11 satellites each, at 500\,km altitude, with an orbital period of approximately 94 minutes. ISL topology is recomputed each step; satellites within 5{,}000\,km are considered linked, corresponding to demonstrated optical terminal capabilities in LEO mega-constellations \cite{Chaudhry2021} and ensuring 8--12 neighbors per satellite on average. To quantify robustness beyond the homogeneous Walker baseline, we also run orbital heterogeneity stress tests (phase jitter, altitude tiers, and their combination), reported in Appendix~\ref{app:orbital_heterogeneity}.

\sloppy
\niparagraph{Satellite configuration.}
Each satellite comprises a 100\,Wh lithium-ion battery, a 120\,W peak solar panel (cosine-attenuated), and an NVIDIA Jetson AGX Orin compute module (Table~\ref{tab:hw_config}). The compute budget is 2.0\,GFLOPs/s, calibrated to the power-envelope-limited throughput: at 10\,W per concurrent image (heavy tier, 1.8\,GFLOPs), the budget supports roughly one image per scheduling step, resulting in compute demand $1.35\times$ supply. Simulations use a timestep of $\Delta= 1s$ over a 72-hour duration.

\niparagraph{Workload.} Each satellite captures 90 images per minute, producing approximately 33\% compute oversubscription relative to the profiled Orin-class power envelope. Event probabilities follow the 1{,}047{,}691-image, 62 land-use category fMoW dataset~\cite{christie2018fmow} (e.g., ports have high vessel probability, while agricultural regions have near-zero probability), providing the spatial and temporal heterogeneity that \helios\ is designed to exploit. We evaluate four detection tasks---fire, flood, vessel, and environmental monitoring---with base event probabilities of 0.05, 0.08, 0.12, and 0.10, respectively. We train EfficientNet~\cite{tan2020efficientnetrethinkingmodelscaling} models, following prior work~\cite{erol2025earthsight, capogrosso2026tinymlenhancescubesatmission}, across three model sizes (B0, B2, B3) with the most accurate models achieving 93-95\% precision and recall. We assign scientific weights $w = \{200, 100, 50, 20\}$ to reflect relative environmental urgency. This produces well-separated Expected Science Values that result in staggered task dropout. Our four-task workload already exceeds the 1-3 task standard in related work~\cite{liu2024phoenix, kodan, serval, orbitaledge, li2025orbitchain} and we demonstrate scale-up to sixteen tasks per image in Section \ref{sec:eval:scaling}.



\begin{table}[t]
  \centering
  \caption{NVIDIA Jetson AGX Orin hardware configuration.}
  \vspace{-2ex}
  \label{tab:hw_config}
  \footnotesize
  \begin{tabular}{ll}
    \toprule
    \textbf{Component} & \textbf{Specification} \\
    \midrule
    CPU    & 12-core Arm Cortex-A78AE v8.2, 64-bit \\
    GPU    & 2048-core NVIDIA Ampere + 64 Tensor Cores \\
    AI perf.\ & 275 TOPS (CPU, GPU, and DLA combined) \\
    Memory & 32\,GB LPDDR5 (204.8\,GB/s bandwidth) \\
    Power  & 15\,W (idle) -- 60\,W (max TDP) \\
    \bottomrule
    \vspace{-3ex}
  \end{tabular}
\end{table}

\niparagraph{Baselines.}
\label{sec:eval:baselines}
We compare \helios\ with four baselines representing progressively richer scheduling policies. \textbf{Static FIFO} processes images strictly in arrival order, with no notion of task value. \textbf{Priority Queue} ranks images by expected scientific value and executes them in descending order. This baseline incorporates value awareness but remains oblivious to hardware state. To avoid pathological behavior at very low battery levels, we add a simple guardrail that suppresses tasks with scientific weight $w < 100$ below 20\% battery level.
\textbf{ESA}~\cite{huang2013esa} is a Lyapunov-based online optimizer that balances immediate reward against a virtual battery backlog to maintain long-term energy stability.
\textbf{Phoenix}~\cite{liu2024phoenix} uses centralized coordination to offload tasks from eclipsed satellites to the highest-SoC neighbor in sunlit orbital planes. Tasks are deferred locally if their TTL expires before the eclipse ends; otherwise, they are offloaded.
Finally, we evaluate three variants of \helios. \textit{\helios\ (no-ISL)} disables inter-satellite offloading. \textit{\helios\ (no-context)} replaces category-conditioned ESVs with a uniform base-rate, isolating the value of contextual priors. \textit{\helios\ (noisy-context)} perturbs those priors with log-normal noise ($\sigma=0.25$), evaluating robustness to imperfect context.

\niparagraph{Hardware thresholds.}
The hardware thresholds come from the device itself: $\text{SoC}_c = 0.15$ is the minimum battery reserve, $\text{SoC}_{\text{safe}} = 0.35$ is the point where the scheduler begins conserving energy, $T_0 = 50^\circ$C is the nominal temperature, and $T_{\max} = 85^\circ$C is the throttling threshold. We then choose the sensitivity coefficients so that the scheduler behaves intuitively: all tasks remain admissible under normal conditions, lower-value tasks are dropped first as eclipse approaches, and thermal and queue effects stay small unless the system is actually under stress. 

\subsection{Main Results}
\label{sec:eval:main}

\begin{table*}[t]
  \centering
  \caption{72-hour fMoW results over a 143-satellite constellation. \helios\ delivers the best overall trade-off between scientific return and operational resilience, achieving the highest scientific goodput and event coverage while maintaining stronger battery health and more even load distribution than the baselines.}
  \label{tab:all_results}
  \setlength{\tabcolsep}{5pt}\footnotesize
  \begin{tabular}{lccc|ccc|c}
    \hline
    & \multicolumn{3}{c}{\textbf{\rule{0pt}{2.5ex}Scientific Value}} & \multicolumn{3}{c}{\textbf{\rule{0pt}{2.5ex}Operational Health}} & \multicolumn{1}{c}{\textbf{\rule{0pt}{2.5ex}Fairness}} \\
    \cline{2-8}
    \textbf{\rule{0pt}{3ex}System\rule[-1.5ex]{0pt}{0pt}} & \textbf{\shortstack{\rule{0pt}{2.5ex}Scientific Goodput\\(M~SV/hr)}} & \textbf{\shortstack{\rule{0pt}{2.5ex}Throughput\\(K\,Images/hr)}} & \textbf{\shortstack{\rule{0pt}{2.5ex}Event \\Coverage (\%)}} & \textbf{\shortstack{\rule{0pt}{2.5ex}Mean Battery\\Level (\%)}} & \textbf{\shortstack{\rule{0pt}{2.5ex}Battery Reserve\\Time (\%)}} & \textbf{\shortstack{\rule{0pt}{2.5ex}Brownout\\Risk (\%)}} & \textbf{\shortstack{\rule{0pt}{2.5ex}Science\ Load\\Balance (\%)}} \\
    \hline
    Static                   & \cellcolor{red!20} 8.20 & \cellcolor{red!20} 346 & \cellcolor{red!20} 44.8 & \cellcolor{red!15} 20.4 & \cellcolor{red!19} 7.4 & \cellcolor{red!10} 64.4 & \cellcolor{red!10} 91.2 \\
    Phoenix \cite{liu2024phoenix}                 & \cellcolor{red!13} 9.00 & \cellcolor{red!7} 380 & \cellcolor{red!11} 49.1 & \cellcolor{green!20} 84.8 & \cellcolor{green!20} 99.9 & \cellcolor{green!20} 0.0 & \cellcolor{red!20} 90.2 \\
    ESA \cite{huang2013esa}                     & \cellcolor{red!12} 9.16 & \cellcolor{red!5} 390 & \cellcolor{red!9} 50.1 & \cellcolor{red!20} 17.0 & \cellcolor{red!20} 5.9 & \cellcolor{red!20} 85.1 & \cellcolor{red!12} 91.0 \\
    Priority                 & \cellcolor{green!5} 11.18 & \cellcolor{red!5} 387 & \cellcolor{green!7} 62.2 & \cellcolor{red!20} 17.0 & \cellcolor{red!20} 6.0 & \cellcolor{red!20} 85.1 & \cellcolor{red!8} 91.4 \\
    \helios\ (no-ISL)        & \cellcolor{green!9} 11.85 & \cellcolor{green!4} 417 & \cellcolor{green!11} 65.7 & \cellcolor{green!4} 35.7 & \cellcolor{green!5} 58.0 & \cellcolor{green!7} 29.1 & \cellcolor{green!10} 94.4 \\
    \helios\ (no-context)    & \cellcolor{green!20} 13.38 & \cellcolor{green!20} 564 & \cellcolor{green!19} 73.1 & \cellcolor{red!6} 26.7 & \cellcolor{red!2} 43.0 & \cellcolor{green!3} 37.9 & \cellcolor{green!11} 94.7 \\
    \helios\ (noisy-context) & \cellcolor{green!20} 13.36 & \cellcolor{green!13} 508 & \cellcolor{green!20} 73.6 & \cellcolor{green!5} 36.6 & \cellcolor{green!5} 59.3 & \cellcolor{green!8} 27.9 & \cellcolor{green!20} 96.4 \\
    \helios                  & \cellcolor{green!20} 13.41 & \cellcolor{green!13} 509 & \cellcolor{green!20} 73.8 & \cellcolor{green!5} 36.7 & \cellcolor{green!5} 59.5 & \cellcolor{green!8} 27.7 & \cellcolor{green!20} 96.4 \\
    \hline
  \end{tabular}
\end{table*}

\begin{figure}[t]
\centering
\includegraphics[width=0.8\columnwidth]{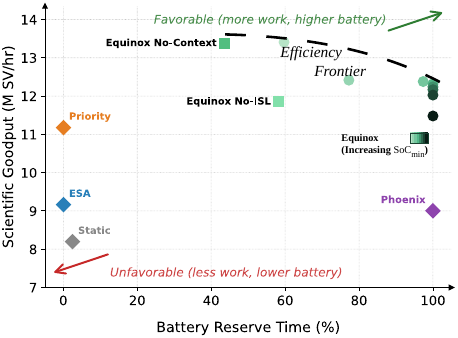}
\caption{\helios\ achieves the highest scientific goodput and event coverage of all evaluated systems while maintaining 59.5\% or more battery reserve time---the best outcome on the scientific return--operational resilience frontier.}
\label{fig:pareto}
\vspace{-3ex}
\end{figure}

Figure~\ref{fig:pareto} summarizes the central trade-off in our evaluation. The baselines occupy opposite ends of the return-resilience spectrum: Phoenix preserves battery almost perfectly but delivers relatively low scientific goodput, while Priority increases scientific return at the cost of chronic depletion. \helios\ moves this frontier outward. It is the only system that simultaneously achieves high scientific goodput and strong battery reserves, placing it on the most favorable Pareto frontier among all evaluated designs. This also showcases that tuning the parameters of \helios\ cost model (in this case, to prioritize battery more) can effectively move \helios\ along the efficiency frontier.

Table~\ref{tab:all_results} shows the full breakdown. \helios\ achieves the highest scientific goodput at 13.41~M\,SV/hr, improving on Priority by 20\% and Static by 64\%, while also delivering the highest event coverage (73.8\%). Unlike Priority, which spends only 5.9\% of timesteps with adequate battery reserve and incurs 85.1\% brownout risk, \helios\ maintains substantially stronger operational health, with 59.5\% battery reserve time and 27.7\% brownout risk. It also achieves the highest scientific load balance (96.4\%), indicating that ISL offloading distributes work cleanly across the constellation rather than concentrating execution on a small subset of satellites.

The gains in scientific goodput decompose across three architectural layers. First, \textit{value-aware selection} (Static $\rightarrow$ Priority) improves scientific goodput by 36\% by concentrating execution on higher-value scenes. Second, \textit{hardware feedback} (Priority $\rightarrow$ \helios\ (no-ISL)) substantially improves operational resilience while still increasing scientific goodput: hardware-aware scheduling prevents chronic eclipse depletion and thermal stress, increasing battery reserve time from 5.9\% to 58\%. Third, \textit{ISL offloading} (\helios\ (no-ISL) $\rightarrow$ \helios) converts locally deferred work into additional scientific return by routing tasks to less constrained neighbors, increasing scientific goodput from 11.85 to 13.41M\,SV/hr and event coverage from 65.7\% to 73.8\%.

The contextual variants clarify the role of scene priors. \textit{\helios\ (no-context)} achieves nearly the same scientific goodput as the full system, but only by processing substantially more images (564\,K/hr vs.\ 509\,K/hr) and with weaker battery health (43.0\% battery reserve time vs.\ 59.5\%, 37.9\% brownout risk vs.\ 27.7\%). Without category-conditioned context, the value gap between tasks narrows, and scheduling behaves more like resource-aware FIFO: the system still exploits hardware feedback and ISL routing, but spends more computation on lower-value images. In contrast, contextual priors allow \helios\ to preserve aggregate scientific return while processing fewer images and maintaining better battery reserves. \textit{\helios\ (noisy-context)} performs nearly identically to the full system, with 13.36\,SV/hr scientific goodput, 73.6\% coverage, and the same scientific load balance, showing that \helios\ is robust to moderate context error.

\niparagraph{Comparison with prior works.}
ESA is hardware-aware but value-blind: it matches Priority in throughput (387\,K images/hr), yet achieves 18.0\% lower scientific goodput than Priority and 31.7\% lower than \helios\ because it does not prioritize higher-value scenes. Phoenix makes the opposite trade-off, preserving battery extremely well (99.9\% battery reserve time, 0.0\% brownout risk) but yielding only 9.00\,SV/hr scientific goodput and 49.1\% coverage. Its rigid offloading structure also produces the weakest scientific load balance (90.2\%). By jointly incorporating scene value, local hardware state, and distributed offloading, \helios\ achieves substantially higher scientific return without chronic depletion.

\begin{table}
  \centering
  \caption{Per-category detection precision (\%) and total detections (K) over 72 hours, where precision reflects per-image selectivity and detections reflect operational impact.}
  \label{tab:hit_rates}
  \small
  \setlength{\tabcolsep}{3.5pt}
  \begin{tabular}{l cccc cccc}
    \toprule
    & \multicolumn{4}{c}{\textbf{Detection Precision (\%)}} & \multicolumn{4}{c}{\textbf{Events Detected (K)}} \\
    \cmidrule(lr){2-5} \cmidrule(lr){6-9}
    \textbf{System} & \footnotesize Fire & \footnotesize Flood & \footnotesize Ship & \footnotesize Monitor  & \footnotesize Fire & \footnotesize Flood & \footnotesize Ship & \footnotesize Monitor \\
    \midrule
    Static   & 6.0 & 7.2 & 5.4 & 8.5  & 1500 & 1803 & 1346 & 2126 \\
    ESA      & 6.0 & 7.2 & 5.4 & 8.5  & 1674 & 1996 & 1510 & 2393 \\
    Phoenix  & 6.0 & 7.2 & 5.4 & 8.5  & 1644 & 1961 & 1479 & 2344 \\
    Priority & 7.3 & 8.6 & 7.3 & 10.5 & 2031 & 2384 & 2037 & 2907 \\
    \helios   & 6.7 & 7.9 & 6.3 & 9.5  & \textbf{2452} & \textbf{2885} & \textbf{2315} & \textbf{3452} \\
    \bottomrule
  \end{tabular}
\end{table}

\niparagraph{Detection Quality.}
Priority improves per-image \emph{detection precision} by discarding low to mid ESV images, but quickly depletes battery reserves and shortens sustained execution through eclipse due to lack of hardware awareness. \helios\ instead trades a small reduction in per-image precision for much longer active operation, processing 31\% more images overall than Priority. As Table~\ref{tab:hit_rates} shows, this yields more absolute detections across all categories: 20.7\% more fires, 21.0\% more floods, 13.6\% more ships, and 18.7\% more environmental events. ESA and Phoenix closely track Static in detection precision, reflecting their lack of value-aware selection.

\begin{figure}
\centering
\begin{subfigure}{1\columnwidth}
    \centering
    \includegraphics[width=0.75\linewidth]{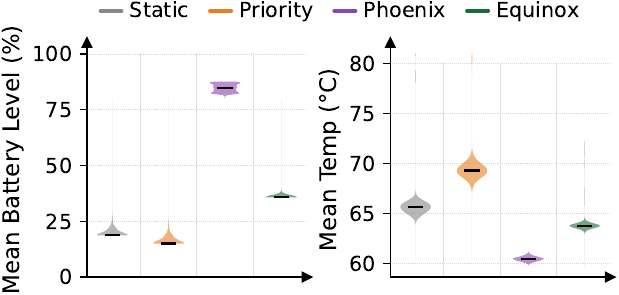}
    \caption{Fleet-wide battery and temperature distributions across systems.}
    \label{fig:soc_trajectory}
\end{subfigure}
\hfill
\hspace{2ex}
\begin{subfigure}{1\columnwidth}
    \centering
    \includegraphics[width=0.6\linewidth]{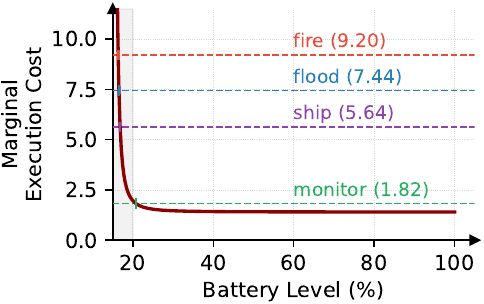}
    \caption{Battery-Aware Task Dropout.}
    \label{fig:staggered_dropout}
\end{subfigure}
\caption{Constellation-level resource behavior under sustained orbital load. (a) \helios\ maintains a stable battery regime with lower thermal stress than the baselines. (b) As battery declines, lower-value tasks are shed first, preserving reserve for higher-value work deeper into eclipse.}
\label{fig:battery_thermal_dropout}
\vspace{-2ex}
\end{figure}

\subsection{Battery Health and Thermal Management}
\label{sec:eval:battery}

Figure \ref{fig:soc_trajectory} illustrates the constellation's distributions of both battery state-of-charge (SoC) and GPU temperatures. Under the Priority baseline, battery levels cluster at the critical threshold ($\text{SoC}_c = 0.15$), confirming depletion and effectively zero operating reserve. While Phoenix preserves a high mean battery level (${>}0.86$), it demonstrates greater variance in battery distribution, reflecting uneven load accumulation across sunlit satellites. \helios\ instead strikes a balance: it maintains a healthy mean battery level with tight fleet-wide variance, indicating an efficient use of available energy rather than either exhaustion or under-utilization. As shown in Figure~\ref{fig:staggered_dropout}, this behavior stems from ordered task shedding: lower-value work drops out first as the battery declines, preserving both energy and compute capacity for higher-value tasks. Consequently, \helios\ achieves not only better battery stability but also lower thermal stress. As reflected in the Fig.~\ref{fig:soc_trajectory} temperature distributions, \helios\ reduces mean GPU temperature to 64.6$^{\circ}$C (compared to 69.5$^{\circ}$C for Priority) and caps peak temperature at 77.4$^{\circ}$C (versus 81.0$^{\circ}$C).

\subsection{Task Selection and ISL Decisions}
\label{sec:eval:isl}

\begin{table}[t]
  \centering
    \caption{Value-ordered ISL offloading by task. Higher-value tasks are offloaded under larger cost gaps, while neighbor cost remains nearly constant, indicating that offloaded work is absorbed by healthy, sunlit satellites. Offload quantity peaks at tasks with value high enough for admission to a sunlit neighbor but low-enough for rejection locally.}
  \label{tab:isl_arbitrage}
  \footnotesize
    \begin{tabular}{lcccr}
    \toprule    
    \textbf{Task} &
    \textbf{\shortstack{Avg.\\ESV}} &
    \textbf{\shortstack{Local/Neighbor\\Cost Ratio}} &
    \textbf{\shortstack{Neighbor\\Cost}} &
    \textbf{\shortstack{Offloaded\\Events (K)}} \\
    \midrule
    Monitor & 1.82 & 1.1$\times$ & 1.40 & {<}1 \\
    Ship  & 5.64 & 1.40$\times$ & 2.13 & 6{,}494 \\
    Flood & 7.44 & 1.90$\times$ & 2.15 & 152 \\
    Fire  & 9.20 & 3.53$\times$ & 2.15 & 3 \\
    \bottomrule
    \end{tabular}
\end{table}

\begin{figure}
\centering
\includegraphics[width=0.75\columnwidth]{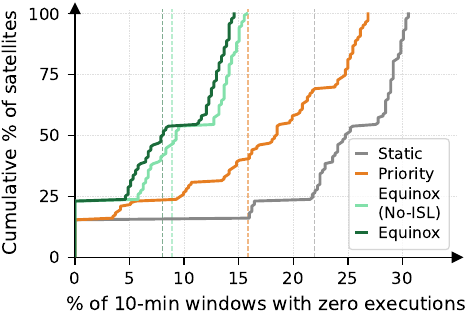}
\caption{CDF of per-satellite inactivity, measured as the fraction of 10-minute windows in which no images are processed. Dashed lines mark fleet means. \helios\ substantially reduces dark observation windows, indicating more consistent coverage across the constellation.}
\label{fig:coverage_gap}
\end{figure}

\helios\ routes $\approx$12\% of executed workload, showing that offloading is important but must be selective during the overall execution strategy. Table~\ref{tab:isl_arbitrage} confirms that ISL offloading is value-ordered. Lower-value Ship tasks (base ESV 5.64) offload under modest local stress (1.40$\times$ neighbor cost), whereas higher-value Flood and Fire tasks offload only when local execution becomes much more expensive (1.90$\times$ and 3.53$\times$). Neighbor costs stay nearly constant across all tiers, showing that offloaded work is naturally absorbed by healthy, sunlit satellites rather than redistributed among stressed nodes.

The key distinction is that \helios\ offloads \emph{selectively by value}, not merely by available capacity. Capacity-blind ISL strategies such as ESA and Phoenix can redistribute queue volume, but they do not distinguish between low-value monitoring and high-value detections. \helios\ instead uses offloading to preserve the most valuable work, improving scientific load balance from 90.2\% (Phoenix) to 96.4\% and avoiding the failure mode in which sunlit satellites spend scarce compute on low-value tasks while eclipsed neighbors are forced to drop high-value events.

Figure~\ref{fig:coverage_gap} shows the downstream effect on constellation coverage. We measure, for each satellite, the fraction of 10-minute windows in which no images are processed. Static leaves 21.9\% of windows dark on average, and Priority reduces this to 15.9\% by selecting more valuable work, but both still exhibit a broad spread across satellites. \helios\ cuts the fleet mean to 8.9\%, substantially reducing chronically inactive windows. This improvement comes from preserving battery and computation during stressed periods, which allows satellites to remain active longer and maintain more consistent geographic coverage over time.

\subsection{Demand Scaling}
\label{sec:eval:scaling}

\begin{figure}[t]
\centering
\includegraphics[width=0.80\columnwidth]{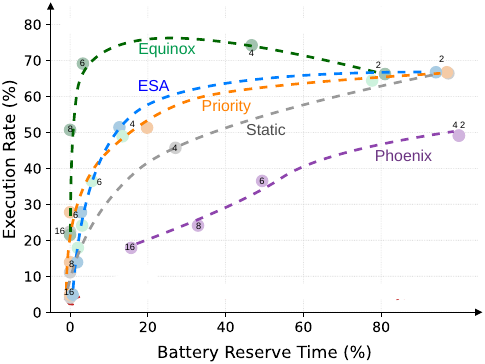}
\caption{Execution rate vs.\ battery reserve time under increasing task contention (2--16 tasks/image), with dashed trend lines. \helios\ sustains a strictly higher frontier at equal contention.}
\label{fig:task_scaling}
\Description{Scatter plot with trend lines of execution rate versus battery reserve time under increasing task contention, showing Helios maintaining a higher frontier than all baselines.}
\end{figure}

We evaluate behavior under increasing contention by scaling the number of tasks per satellite from 2 to 16, which proportionally increases both task arrival rate and per-image energy demand. As Figure~\ref{fig:task_scaling} shows, execution rate declines for every system as contention grows---this is unavoidable when demand outpaces fixed compute and energy budgets. What differentiates the systems is where their trendlines sit: \helios\ occupies a strictly higher frontier than all baselines, executing a greater fraction of arriving tasks at every contention level while maintaining comparable or better battery reserve. Only Phoenix exceeds \helios\ in battery reserve time, at the cost of far lower execution rate.

At low contention, all systems execute roughly 66\% of tasks and behave similarly. Phoenix is the outlier: it preserves battery aggressively (100\% reserve time) but executes fewer tasks (49\%), trading throughput for operational safety. As contention increases, the baselines degrade rapidly. Static and Priority drop below 15\% execution rate by 8 tasks, with battery reserve collapsing to near zero. \helios\ instead sheds lower-value work selectively before the fleet reaches depletion, sustaining 50.7\% execution at 8 tasks and 21.5\% at 16---a $5.2\times$ improvement over Static. This translates directly into scientific goodput: \helios\ reaches 39.3\,M\,SV/hr at 16 tasks, 4.4x Priority and ESA and 1.8x Phoenix. Phoenix maintains strong battery reserve throughout, but pays a steep throughput cost that widens at high contention.

\begin{figure}[t]
\centering
\includegraphics[width=0.85\columnwidth]{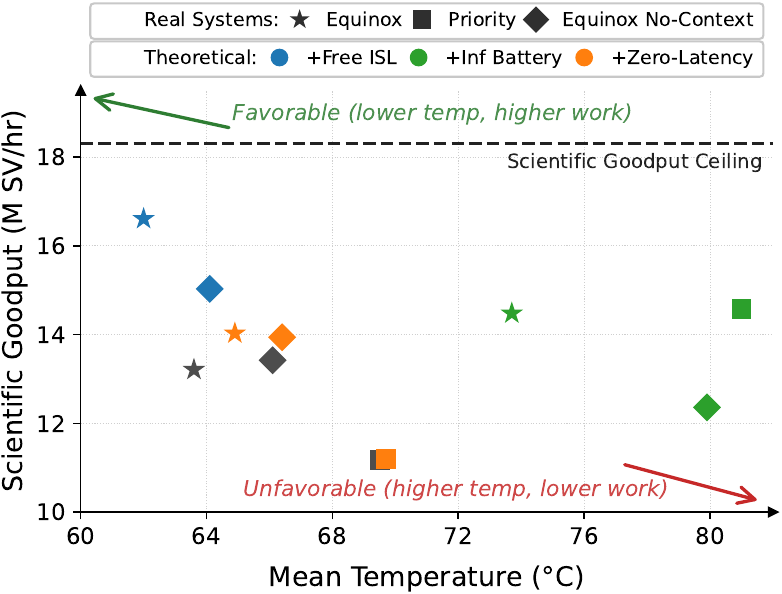}
\caption{Scientific goodput vs.\ mean temperature under individual constraint relaxations (72\,h fMoW). Each system is shown across four configurations: baseline, free ISL, infinite battery, and zero-latency GPU inference. The dashed line marks the oracle scientific goodput ceiling (18.31\,M\,SV/hr).}
\label{fig:constraint_relaxation}
\Description{Scatter plot of scientific goodput versus mean temperature for Helios, Priority, and Helios no-context under four constraint relaxations.}
\end{figure}

\subsection{Constraint Relaxation}
Figure~\ref{fig:constraint_relaxation} places each system under four hardware relaxations to isolate why \helios\ must combine value awareness with resource awareness. \textit{More execution does not automatically produce more scientific value.} Under infinite battery, \helios\ and \helios\ \textit{(no-context)} execute at nearly identical rates (67.3\% vs.\ 67.6\%), yet \helios\ delivers 17\% higher goodput (14.48 vs.\ 12.36\,M\,SV/hr) and runs 6.2$^\circ$C cooler (73.7$^\circ$C vs.\ 79.9$^\circ$C). Contextual priors continue to steer compute toward higher-value images even when energy is no longer scarce. \textit{Latency reduction helps only when energy is managed.} Zero-latency inference gives Priority essentially no goodput gain---it simply exhausts battery faster. \helios\ converts the same relaxation into a 6\% goodput increase, because it retains enough reserve to exploit the freed capacity. \textit{Network capacity amplifies value awareness but does not replace it.} With free ISL, \helios\ reaches 16.61\,M\,SV/hr at 89.9\% execution rate. Under identical routing conditions, \helios\ still outperforms \helios\ \textit{(no-context)} by 1.58\,M\,SV/hr (16.61 vs.\ 15.03): routing redistributes work, but without contextual priors more of that capacity is spent on lower-value tasks. Across all relaxations, the gap to the oracle ceiling (18.31\,M\,SV/hr) narrows most when both value awareness and resource awareness are present. A detailed per-system breakdown is in Appendix~\ref{app:relaxation}.

\begin{figure}[t]
\centering
\includegraphics[width=0.85\columnwidth]{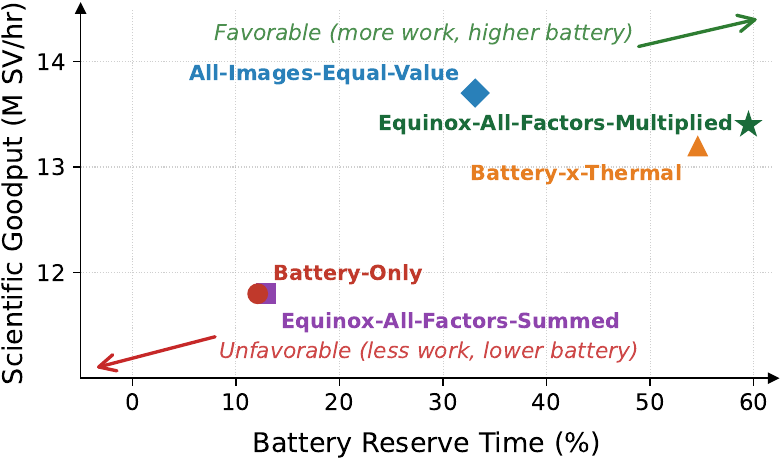}
\caption{Control signal ablation: scientific goodput vs.\ battery reserve time for each pricing variant (72\,h fMoW). The original, multiplicative \helios\ achieves the best combined outcome; additive composition and battery-only signals collapse battery reserve without improving goodput.}
\label{fig:price_ablation}
\end{figure}

\subsection{Ablation of Control Components}
Figure~\ref{fig:price_ablation} shows how each component of the \helios\ control signal contributes to the scientific goodput--battery reserve frontier.
\textit{Battery Only} is insufficient, achieving 11.8 M\,SV/hr with only 12\% battery reserve time. Adding thermal feedback raises goodput to 13.2M\,SV/hr and reserve time to 54.6\%, demonstrating that thermal moderation is not merely a safety guardrail but a prerequisite for sustained execution. The full multiplicative formulation reaches 13.4M\,SV/hr at 59.5\% reserve time; queue-depth awareness absorbs bursty arrivals before they translate into battery stress.

Signal structure matters as much as signal content. Replacing multiplication with addition (\textit{All-Factors-Summed}) reduces reserve time to 12.9\%, matching battery only, as additive composition compresses dropout thresholds that separate high- and low-value tasks, causing all work to remain active until exhaustion. 

\textit{All-Images-Equal-Value}, a scenario where all tasks and images share a uniform ESV, enables understanding the benefit of priority based offloading. This system achieves the highest raw goodput (13.7M\,SV/hr) but only 33.1\% reserve time, as without value differentiation the system cannot shed low-priority work early to preserve power. The full \helios\ design is the only variant that simultaneously achieves strong scientific return and strong operational reserve.

\begin{table}[t]
  \centering
  \caption{Jetson Orin Nano Super hardware profile. Task energy is driven primarily by latency across tiers.}
  \label{tab:hw_validation}
  \footnotesize
  \setlength{\tabcolsep}{6pt}
  \begin{tabular}{lcccc}
    \toprule
    \textbf{\shortstack{EfficientNet\\Variant}} & \textbf{\shortstack{p50\\(ms)}} & \textbf{\shortstack{p95\\(ms)}} & \textbf{\shortstack{GPU Power\\(W)}} & \textbf{\shortstack{Energy/Image\\($\mu$Wh)}} \\
    \midrule
    Light (B0)    & 32 & 33 & 3.14 & 28 \\
    Standard (B2) & 45 & 46 & 3.13 & 39 \\
    Heavy (B3)    & 51 & 52 & 3.24 & 46 \\
    \bottomrule
  \end{tabular}
\end{table}

\begin{figure}
\centering
\includegraphics[width=\columnwidth]{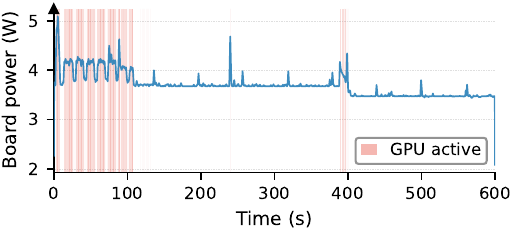}
\caption{Board power (W) during a 600-second live trace on Jetson Orin Nano Super. Red shading marks GPU-active inference bursts. Idle runtime power sits at 3.7\,W.}
\label{fig:hw_timeline}
\Description{Board power timeline over 600 seconds with red shading marking GPU-active inference bursts.}
\end{figure}

\subsection{Hardware Validation}
\label{sec:eval:hw}

Table~\ref{tab:hw_validation} grounds our simulator in measurements from a Jetson Orin Nano Super. Across task tiers, GPU power is nearly constant (3.13--3.24\,W), while energy varies primarily with execution time. Latency is also tightly concentrated within each tier, with p50--p95 spread of at most 1\,ms, supporting our use of fixed-duration task models in simulation. Under these measurements, the simulator’s 2.0\,GFLOPs/s budget maps closely to the 10\,W concurrent image-processing envelope. A 600-second live \texttt{tegrastats} trace (Figure~\ref{fig:hw_timeline}) further shows that the runtime itself imposes negligible overhead. Scheduling and admission logic run entirely on the CPU, contributing no measurable GPU utilization during task selection, while total board power remains stable at 3.5--3.8\,W and GPU temperature at 44--46$^{\circ}$C. In other words, the control logic does not consume the thermal or power budget reserved for inference, validating our assumption that scheduling overhead is negligible relative to task execution.


\subsection{Limitations}
\label{sec:eval:limitations}

Three design choices bound the current system and point to natural extensions.
\begin{itemize}[nosep,leftmargin=*]
\item \textbf{Single-hop ISL.} \helios\ forwards tasks at most once. However, since the first hop typically reaches a sunlit, low-cost neighbor, multi-hop likely offers small benefit.
\item \textbf{Fixed control parameters.} Battery, thermal, and queue coefficients are fixed at deployment. While currently robust (Figure~\ref{fig:beta_sweep_app}), dynamically adapting parameters to mission phase or workload conditions is a natural next step.
\item \textbf{Fixed task set.} Deployment-static task sets simplify scheduling but preclude joint valuation of interdependent events (e.g., correlated wildfires and flooding).
\end{itemize}
\section{Related Work}
\label{sec:related}

\niparagraph{Orbital Edge Computing and Real-Time Analytics.} OEC~\cite{orbitaledge} first demonstrated in-orbit inference on CubeSats to alleviate downlink bottlenecks. Subsequent systems refined \emph{what} to run and \emph{where}: Serval~\cite{serval} splits queries between glacial filters on the ground and dynamic filters on orbit, while EarthSight~\cite{erol2025earthsight} couples ground-station scheduling with adaptive on-orbit filter ordering. OrbitChain~\cite{li2025orbitchain}, SateLight~\cite{sate2025satelight}, and LEOEdge~\cite{yao2025leoedge} further explore workflow orchestration, model lifecycle management, and orbital offloading. Another OEC direction address compressing models to enabling more efficient inference for real-time analytics, including knowledge distillation~\cite{Gou2020KnowledgeDA}, model pruning~\cite{Zhang2022AdvancingMP, fluid}, and weight sharing~\cite{cai2020onceforall, erol2025earthsight}. Across OEC literature, however, scheduling remains mostly static, ground-directed, or decoupled from instantaneous hardware state. \helios\ addresses this gap by making execution decisions directly responsive to local battery, thermal, and queue conditions. 

\niparagraph{Energy-Aware and Safety-Critical Scheduling.}
Prior work treats energy either as a resource to optimize or as a constraint to satisfy. Kodan~\cite{kodan} and AdaEO~\cite{yang2024adaptive} reduce onboard cost through model selection and confidence-adaptive compression. Huang and Neely~\cite{huang2013esa} provide the theoretical basis for constrained energy-harvesting scheduling through ESA, while Hu et al.~\cite{hu2023lyapunov} and Zhang et al.~\cite{zhang2024peer} adapt Lyapunov methods to satellite edge systems. HyperDrive~\cite{pusztai2024hyperdrive} treats thermal state as a hard placement constraint. \helios\ instead unifies these signals into a single runtime control mechanism: local battery, thermal, and queue state jointly determine when lower-value tasks should be shed and when higher-value work should still proceed.

\niparagraph{Inter-Satellite Coordination and Mobility.}
Phoenix~\cite{liu2024phoenix} and SkyCastle~\cite{li2024skycastle} address complementary aspects of constellation coordination. Phoenix always routes tasks towards sunlit satellites to reduce battery depletion, while SkyCastle focuses on routing under fast ISL topology changes. OrbitChain~\cite{li2025orbitchain} and FedSpace~\cite{so2022fedspace} use ISLs for tracking image history and federated learning, respectively. The main limitation of prior offloading schemes is that they are driven primarily by energy state or connectivity, not by task value. \helios\ instead uses ISLs selectively to preserve high-value work under local resource stress.


\section{Conclusion}
\label{sec:conclusion}

\helios\ shows that orbital scheduling should not separate scientific prioritization from hardware state. By compressing battery, thermal, queue, and value signals into a single hardware-aware control mechanism, it enables decentralized satellites to shed low-value work early, preserve reserve through eclipse, and redirect high-value tasks to healthier neighbors. \helios\ suggests that real-time resource-aware control can serve as a practical foundation for constellation-scale scientific autonomy, with natural extensions to adaptive mission objectives, ISL routing, and downlink planning.

\balance

\bibliographystyle{ACM-Reference-Format}
\bibliography{references}

\newpage

\appendix
\section{Experimentation Environment}
\label{app:simulation}

With the mechanism and its parameters defined, this section documents the physical simulation environment and detailed per-task results to support reproducibility of Section~\ref{sec:eval}.

\subsection{Orbital Mechanics}
\label{app:orbital}

The constellation consists of 143 satellites in 13 orbital planes of 11 satellites each, at 500\,km altitude with $53^{\circ}$ inclination. Orbits are initialized from Two-Line Element (TLE) sets generated for the Walker-Delta pattern; propagation uses the SGP4 model via the \texttt{sgp4} Python library's \texttt{SatrecArray} interface, which propagates all 143 satellites in a single C-level call per timestep. Positions are returned in the True Equator Mean Equinox (TEME) reference frame; geodetic coordinates (latitude, longitude, altitude) are derived via vectorized NumPy operations. TEME differs from the Geocentric Celestial Reference System (GCRS) by at most $\sim$0.3\,arcsec, negligible for the inter-satellite distance and shadow geometry calculations used here.

\niparagraph{Eclipse detection.}
The Sun's position is computed once per timestep via the Skyfield ephemeris library. Eclipse state for all 143 satellites is then determined by a vectorized cylindrical shadow test: a satellite is in eclipse if the perpendicular distance from the satellite to the Earth--Sun line is less than Earth's radius and the satellite is on the far side of the Earth from the Sun. This avoids per-satellite Skyfield calls and reduces eclipse computation to a single matrix operation.

\niparagraph{ISL topology.}
Inter-satellite distances are computed using \texttt{scipy.spatial.distance.cdist} over TEME positions, producing the full $143 \times 143$ distance matrix in a single C-level call. Satellites within 5{,}000\,km are linked, yielding 8--12 neighbors per satellite on average. The ISL graph is represented as a NetworkX graph, recomputed each timestep.

\subsection{Orbital Heterogeneity Stress Test}
\label{app:orbital_heterogeneity}

The primary evaluation in Section~\ref{sec:eval:main} uses a homogeneous Walker-style constellation. To test whether conclusions depend on that symmetry, we evaluate three perturbations of orbital trajectories while keeping workload, hardware, and marginal execution cost calculation parameters unchanged:

\begin{itemize}
  \item \textbf{Phase Jitter:} Random mean-anomaly perturbation within each plane (non-uniform intra-plane phasing).
  \item \textbf{Altitude Tiers:} Plane-wise altitude tiers at 450/475/500/525/550\,km.
  \item \textbf{Both:} Phase jitter and altitude tiers enabled simultaneously.
\end{itemize}

Table~\ref{tab:orbital_heterogeneity} compares these variants against the Default (Even Spacing) baseline for five systems, including the \helios~(no-ISL) ablation.

\begin{table*}[ht]
  \centering
  \caption{Orbital heterogeneity stress test (72h fMoW). Default (Even Spacing) is the homogeneous Walker baseline used in Section~\ref{sec:eval:main}. Column names follow the main evaluation table where applicable: Scientific Goodput (M\,SV/hr), Throughput (K\,Images/hr), Mean Battery Level, Battery Reserve Time, and Brownout Risk.}
  \label{tab:orbital_heterogeneity}
  \footnotesize
  \setlength{\tabcolsep}{4pt}
  \begin{tabular}{l l c c c c c}
    \toprule
    \textbf{\rule{0pt}{3ex}Orbit Config\rule[-1.5ex]{0pt}{0pt}} & \textbf{\rule{0pt}{3ex}System\rule[-1.5ex]{0pt}{0pt}} & \textbf{\shortstack{\rule{0pt}{2.5ex}Scientific Goodput\\(M~SV/hr)}} & \textbf{\shortstack{\rule{0pt}{2.5ex}Throughput\\(K\,Images/hr)}} & \textbf{\shortstack{\rule{0pt}{2.5ex}Mean Battery\\Level}} & \textbf{\shortstack{\rule{0pt}{2.5ex}Battery Reserve\\Time (\%)}} & \textbf{\shortstack{\rule{0pt}{2.5ex}Brownout\\Risk (\%)}} \\
    \midrule
    \shortstack[l]{Default\\(Even Spacing)} & Static   & 8.20  & 346.0 & 0.204 & 7.4  & 64.4 \\
    & Priority & 11.18 & 387.0 & 0.170 & 5.9  & 85.1 \\
    & Phoenix  & 9.00  & 380.0 & 0.848 & 99.9 & 0.0 \\
    & \helios~(no-ISL) & 11.85 & 417.0 & 0.357 & 58.0 & 29.1 \\
    & \helios  & 13.41 & 509.0 & 0.367 & 59.5 & 27.7 \\
    \midrule
    Phase Jitter & Static   & 8.20  & 345.7 & 0.204 & 4.3  & 94.8 \\
    & Priority & 11.18 & 386.7 & 0.170 & 4.3  & 94.8 \\
    & Phoenix  & 9.04  & 381.2 & 0.848 & 99.9 & 0.0 \\
    & \helios~(no-ISL) & 11.65 & 410.0 & 0.344 & 23.6 & 33.7 \\
    & \helios  & 13.27 & 505.4 & 0.357 & 33.5 & 29.5 \\
    \midrule
    Altitude Tiers & Static   & 8.16  & 344.1 & 0.204 & 4.3  & 95.0 \\
    & Priority & 11.41 & 395.2 & 0.169 & 4.3  & 95.0 \\
    & Phoenix  & 9.09  & 383.4 & 0.850 & 99.9 & 0.0 \\
    & \helios~(no-ISL) & 11.65 & 409.9 & 0.333 & 23.6 & 33.7 \\
    & \helios  & 13.35 & 510.2 & 0.349 & 33.9 & 29.5 \\
    \midrule
    Both & Static   & 8.16  & 344.1 & 0.204 & 4.3  & 95.0 \\
    & Priority & 11.42 & 395.6 & 0.169 & 4.3  & 95.0 \\
    & Phoenix  & 9.04  & 381.4 & 0.851 & 99.9 & 0.0 \\
    & \helios~(no-ISL) & 11.65 & 409.9 & 0.333 & 23.6 & 33.7 \\
    & \helios  & 14.56 & 584.6 & 0.360 & 38.7 & 28.5 \\
    \bottomrule
  \end{tabular}
\end{table*}

\niparagraph{Interpretation.}
The results are robust to orbital heterogeneity. Across all four orbit configurations, \helios\ remains the highest-scientific-goodput system, and its advantage over the baselines persists under both phase and altitude perturbations. The \helios\ (no-ISL) ablation changes little across the perturbed cases, whereas full \helios\ continues to benefit from heterogeneity, showing that ISL coordination is the mechanism that converts orbital diversity into additional scientific return. Phoenix preserves battery almost perfectly in every setting, but again does so at the cost of substantially lower scientific goodput, reproducing the same return--resilience trade-off as in the main evaluation. Notably, the combined perturbation case (\textit{Both}) improves \helios\ most strongly, raising scientific goodput to 14.56\,SV/hr and throughput to 584.6\,K images/hr, suggesting that greater orbital asymmetry creates more opportunities for value-aware offloading.

\subsection{Hardware Models}
\label{app:hardware}

\niparagraph{Battery model.}
Each satellite carries a 100\,Wh lithium-ion battery modeled as an energy reservoir with instantaneous charge/discharge:
\[
  \text{SoC}_{t+1} = \text{SoC}_t + \frac{(P_{\text{solar}} - P_{\text{load}}) \cdot \Delta t}{C_{\text{batt}}}
\]
where $P_{\text{solar}} = \min(P_{\text{max}} \cdot \cos\theta_{\text{sun}},\; P_{\text{max}})$ when sunlit and 0 during eclipse, $P_{\text{load}} = P_{\text{idle}} + N_{\text{tasks}} \cdot P_{\text{task}}$ is total power draw, and $C_{\text{batt}} = 100$\,Wh. SoC is clamped to $[0, 1]$.

\niparagraph{Thermal model.}
Die temperature follows a first-order RC thermal model with radiative cooling:
\[
  T_{t+1} = T_t + \frac{\Delta t}{\tau} \cdot (T_{\text{eq}} - T_t)
\]
where $\tau$ is the thermal time constant and $T_{\text{eq}}$ is the equilibrium temperature at current power dissipation. In vacuum, heat rejection is by radiation only ($\propto T^4$), and the equilibrium temperature can \emph{decrease} below current temperature when power dissipation drops---this is correct RC model behavior, not a bug.

\begin{table}[ht]
  \centering
  \caption{Representative fMoW categories and event probabilities. Full mapping covers 62 categories; default probabilities apply to unlisted categories.}
  \label{tab:fmow_categories}
  \small
  \begin{tabular}{lrrrr}
    \toprule
    \textbf{Category} & \textbf{$P_{\text{fire}}$} & \textbf{$P_{\text{flood}}$} & \textbf{$P_{\text{ship}}$} & \textbf{$P_{\text{monitor}}$} \\
    \midrule
    flooded\_road       & 0.01 & \textbf{0.90} & 0.01 & 0.15 \\
    shipyard            & 0.03 & 0.03 & \textbf{0.80} & 0.05 \\
    port                & 0.03 & 0.05 & \textbf{0.70} & 0.08 \\
    smokestack          & \textbf{0.35} & 0.03 & 0.02 & 0.12 \\
    dam                 & 0.02 & \textbf{0.40} & 0.05 & 0.10 \\
    military\_facility  & 0.08 & 0.05 & 0.08 & \textbf{0.40} \\
    nuclear\_powerplant & 0.10 & 0.05 & 0.02 & \textbf{0.35} \\
    oil\_or\_gas\_fac.  & \textbf{0.30} & 0.03 & 0.05 & 0.15 \\
    lighthouse          & 0.02 & 0.10 & \textbf{0.35} & 0.06 \\
    factory\_or\_pp.    & \textbf{0.25} & 0.04 & 0.02 & 0.15 \\
    lake\_or\_pond      & 0.02 & \textbf{0.25} & 0.08 & 0.06 \\
    space\_facility     & 0.08 & 0.03 & 0.02 & \textbf{0.30} \\
    prison              & 0.06 & 0.04 & 0.02 & \textbf{0.25} \\
    crop\_field         & 0.06 & 0.10 & 0.02 & 0.06 \\
    \midrule
    \textit{\_default}  & \textit{0.05} & \textit{0.08} & \textit{0.12} & \textit{0.10} \\
    \bottomrule
  \end{tabular}
\end{table}

\subsection{fMoW Category Mapping}
\label{app:fmow}

Table~\ref{tab:fmow_categories} lists representative categories from the functional Map of the World (fMoW) dataset and their task-conditioned event probabilities. The full mapping covers 62 categories; we show the 15 with the greatest variation across tasks to highlight the scene heterogeneity that \helios\ exploits through category-conditioned context.
This heterogeneity is substantial: for example, $P(\text{ship} \mid \text{shipyard}) = 0.80$ whereas $P(\text{ship} \mid \text{crop\_field}) = 0.02$, a 40$\times$ difference. Such variation is what makes contextual ranking effective. A vessel-detection task applied to a shipyard image is far more likely to produce useful scientific output than the same task applied to an agricultural scene, so category-conditioned estimates help \helios\ focus limited onboard compute and energy on the most promising opportunities. Without these priors (the no-context ablation), images within the same task class become artificially indistinguishable, and scheduling loses an important source of selectivity, behaving much more like resource-limited FIFO within each task type.


\section{Parameter Calibration}
\label{app:calibration}

The scheduling mechanism (Section~\ref{sec:system}) uses several fixed parameters. This section justifies each choice and characterizes sensitivity where applicable. All parameters are set at deployment; Appendix~\ref{app:sensitivity} evaluates robustness to the most important one ($\beta$).

\subsection{Barrier Coefficient \texorpdfstring{$\beta$}{Beta}}
\label{app:beta}

The barrier coefficient $\beta = 0.001$ controls how sharply the marginal execution cost rises as SoC approaches the critical threshold SoC$_c = 0.15$. Using Eq.~\ref{eq:dropout_soc}, Table~\ref{tab:dropout_soc} gives the closed-form dropout SoC for each task under the default parameters ($P_{\text{base}} = 1.40$, SoC$_c = 0.15$).

\begin{table}[ht]
  \centering
  \caption{Analytic dropout thresholds at $\beta = 0.001$, $P_\text{base} = 1.40$, $\text{SoC}_c = 0.15$.}
  \label{tab:dropout_soc}
  \small
  \begin{tabular}{lrr}
    \toprule
    \textbf{task} & \textbf{ESV} & \textbf{SoC$^*$} \\
    \midrule
    Env.\ monitoring & 1.82 & 0.208 \\
    Vessel detection & 5.64 & 0.168 \\
    Flood detection  & 7.44 & 0.165 \\
    Fire detection   & 9.20 & 0.163 \\
    \bottomrule
  \end{tabular}
\end{table}

The separation between the monitor task (SoC$^* = 0.208$) and the three higher-value tasks (SoC$^* \in [0.163, 0.168]$) spans approximately 4 percentage points of SoC---roughly 4\,Wh on a 100\,Wh battery, or $\sim$2 minutes of eclipse-phase drain at 120\,W load. This is large enough for the monitor task to be consistently excluded before the others, but small enough that the transition occupies a narrow SoC band rather than wasting capacity. At $\beta = 0.01$ (10$\times$ larger), the monitor dropout shifts to SoC$^* = 0.33$, sacrificing significant sunlit throughput. At $\beta = 0.0001$ (10$\times$ smaller), the monitor dropout falls to SoC$^* = 0.168$, barely above the higher-value tasks, and the staggered dropout effect is compressed into a $<$2\,Wh window.

Figure~\ref{fig:beta_sweep_app} in Appendix~\ref{app:sensitivity} shows that despite this sensitivity in \emph{where} dropout occurs, Scientific Goodput varies by only $\pm$10\% across a 100$\times$ range of $\beta$. The mechanism is self-correcting: increasing $\beta$ raises execution costs earlier, triggering self-curtailment that increases SoC headroom and moderates $f_{\text{batt}}$.

\subsection{Queue Backpressure \texorpdfstring{$\gamma$}{Gamma}}
\label{app:gamma}

The queue factor $f_{\text{queue}}(q) = 1 + \gamma_q q$ with $\gamma_q = 0.01$ is deliberately linear. A quadratic or exponential form would create a sharp cost jump when the first deferred image arrives, suppressing task execution and causing the queue to drain, which lowers the execution cost, re-admitting tasks, refilling the queue---an oscillation cycle. The linear form provides proportional backpressure: at $q = 10$ images, the cost markup is 10\%; at $q = 100$, it doubles. This keeps queue growth stable while applying meaningful cost when backlogs accumulate during extended eclipse periods.

The constant $\gamma = 0.01$ is calibrated so that the deferred queue depth at which the queue factor alone would exclude the monitor task ($\text{ESV} = 1.82$) is approximately $q = 30$ images:
\[
  P_{\text{base}} \cdot (1 + \gamma q) = 1.82 \implies q = \frac{1.82/1.40 - 1}{0.01} = 30
\]
This ensures that moderate backlogs (10--20 images) create cost pressure without immediately excluding any task, while sustained backlogs ($>$30) begin to trigger value-ordered shedding.

\subsection{ISL Cost Parameters}
\label{app:isl_cost}

The offloading condition (Section \ref{sec:system:isl}) uses two cost terms:

\niparagraph{Link cost ($c_{\text{link}}$).}
$c_{\text{link}} = 0.05 \cdot P_\ell$, proportional to the local marginal execution cost. At the base execution cost (1.40), this is 0.07 per hop. At eclipse-stressed execution costs (5--10), it rises to 0.25--0.50. The proportional form ensures that ISL routing requires a \emph{relative} cost advantage, not just an absolute one: a 5\% cost differential is the minimum for routing to be profitable at any cost level. This prevents spurious routing during sunlit periods when both local and neighbor execution costs are near baseline.

\niparagraph{Latency penalty ($c_{\text{lat}}$).}
$c_{\text{lat}} = 0.10$ flat per hop. At 5{,}000\,km maximum ISL range and optical link speeds, one-way propagation delay is ${\sim}$17\,ms---negligible for image processing but non-zero for time-critical detections. The flat 0.10 penalty represents the opportunity cost of delayed processing. Combined with $c_{\text{link}}$, the total ISL cost at the base execution cost is 0.17 per hop, requiring $P_\ell \geq P_n + 0.17$ for routing to trigger---a condition met only during eclipse-induced battery stress.

\niparagraph{Multi-hop cost accumulation.}
A second ISL hop would add another $0.05 \cdot P_{n_1} + 0.10 \approx 0.17$ at baseline-cost neighbors. Since the first-hop destination is typically a sunlit satellite with execution cost near $P_\text{base}$, the cost differential between the first-hop destination and its neighbors is small ($<$0.10), making a second hop unprofitable except under extreme, sustained eclipse at the highest-ESV tiers. This justifies the single-hop design (Section~\ref{sec:eval:limitations}).

\subsection{Deferred Queue TTL}
\label{app:ttl}

The TTL of 300 seconds ($\approx$5 minutes) is calibrated to the eclipse-to-sunlit transition. A satellite at 500\,km LEO has an orbital period of $\sim$94 minutes with $\sim$33 minutes in eclipse. The transition from peak eclipse stress to sunlit recovery takes approximately 10--15 minutes as the satellite crosses the terminator and solar panels begin recharging. A 300-second TTL spans roughly one-third of this transition, giving deferred images a reasonable window to encounter lower execution costs as the battery recharges, without retaining stale images across a full orbital cycle.

If the TTL were much shorter ($<$60\,s), deferred images would expire before the satellite reaches sunlight, negating the temporal demand-shifting benefit. If much longer ($>$900\,s), images from the previous eclipse cycle would compete with fresh arrivals in the next sunlit phase, creating artificial congestion.

\subsection{Safe Reserve Threshold \texorpdfstring{SoC$_{\text{safe}}$}{SoC\_safe}}
\label{app:soc_safe}

The battery reserve time metric (Section \ref{sec:system:metrics} uses SoC$_{\text{safe}} = 0.35$ as the operator-defined reserve threshold. This value represents the minimum charge at which a satellite can handle an unplanned high-priority tasking burst---for example, a wildfire detection request that arrives during eclipse. At 100\,Wh battery capacity, SoC $= 0.35$ provides 40\,Wh of reserve, sufficient for approximately 20 minutes of continuous processing at 120\,W task load (accounting for concurrent solar charging on lit satellites). The choice is conservative: a satellite with SoC $> 0.35$ can execute any arriving task without risk of breaching SoC$_c = 0.15$, even if the task arrives during the worst-case eclipse phase.

\section{Constraint Relaxation Ablation}
\label{app:relaxation}

Table~\ref{tab:relaxation_full} reports all eight metrics for three systems---
\helios, \helios~(no-prior), and Priority---under four independent constraint
relaxations over a 72-hour fMoW simulation. Each row removes exactly one resource
limit; Oracle removes all simultaneously. The \textit{no-prior} variant uses
uniform base-rate task allocation with no category-conditioned ESV; it is otherwise
identical to \helios.

\begin{table*}[ht]
  \centering
  \caption{Full constraint relaxation results (72\,h fMoW, AGX Orin). Science Load Balance is higher when more uniform across satellites.}
  \label{tab:relaxation_full}
  \small
  \begin{tabular}{@{}lrrrrrrr@{}}
    \toprule
    \textbf{\rule{0pt}{3ex}System\rule[-1.5ex]{0pt}{0pt}} & \textbf{\shortstack{\rule{0pt}{2.5ex}Scientific Goodput\\(M~SV/hr)}}
      & \textbf{\shortstack{\rule{0pt}{2.5ex}Event\\Coverage (\%)}}
      & \textbf{\shortstack{\rule{0pt}{2.5ex}Executed\\Images (M)}}
      & \textbf{\shortstack{\rule{0pt}{2.5ex}Execution\\Rate (\%)}}
      & \textbf{\shortstack{\rule{0pt}{2.5ex}Mean Battery\\Level}}
      & \textbf{\shortstack{\rule{0pt}{2.5ex}Mean Temp\\(°C)}}
      & \textbf{\shortstack{\rule{0pt}{2.5ex}Science Load\\Balance (\%)}} \\
    \midrule
    \helios                    & 13.21 & 72.6 & 36.1 & 65.0 & 0.356 & 64.6 & 96.1 \\
    \quad+Free ISL             & 16.61 & 90.7 & 50.0 & 89.9 & 0.409 & 62.0 & 96.2 \\
    \quad+Inf.\ Battery        & 14.48 & 80.0 & 37.4 & 67.3 & 1.000 & 73.7 & 99.7 \\
    \quad+Instant CPU          & 14.03 & 76.5 & 42.0 & 75.6 & 0.181 & 66.3 & 93.3 \\
    \quad Oracle               & 18.31 & 100  & 55.6 & 100  & 0.968 & 35.0 & 99.8 \\
    \midrule
    \helios~(no-prior)         & 13.42 & 73.3 & 40.7 & 73.3 & 0.264 & 66.1 & 94.7 \\
    \quad+Free ISL             & 15.03 & 82.1 & 45.6 & 82.1 & 0.358 & 64.1 & 94.2 \\
    \quad+Inf.\ Battery        & 12.36 & 67.5 & 37.5 & 67.6 & 1.000 & 79.9 & 99.7 \\
    \quad+Instant CPU          & 13.94 & 76.1 & 42.3 & 76.2 & 0.172 & 66.4 & 93.1 \\
    \quad Oracle               & 18.31 & 100  & 55.6 & 100  & 0.968 & 35.0 & 99.8 \\
    \midrule
    Priority                   & 11.18 & 61.9 & 27.8 & 50.1 & 0.170 & 69.5 & 91.4 \\
    \quad+Free ISL             & ---   & ---  & ---  & ---  & ---   & ---  & ---  \\
    \quad+Inf.\ Battery        & 14.57 & 80.5 & 37.1 & 66.7 & 1.000 & 81.0 & 99.8 \\
    \quad+Instant CPU          & 11.19 & 61.9 & 27.9 & 50.3 & 0.154 & 69.7 & 91.3 \\
    \quad Oracle               & 18.31 & 100  & 55.6 & 100  & 0.968 & 35.0 & 99.8 \\
    \bottomrule
  \end{tabular}
\end{table*}

\niparagraph{Joint hardware and value awareness jointly determine Scientific Goodput.}
The infinite battery rows expose the fundamental gap between the systems.
\helios~and no-prior execute at nearly identical rates (67.3\% vs.\ 67.6\%), yet
\helios~delivers 17\% higher Scientific Goodput (14.48 vs.\ 12.36\,M\,SV/hr). With equal access to energy,
the differentiator is which images are chosen: \helios's category-conditioned ESV
concentrates compute on scenes with high estimated event probability, while no-prior
applies uniform values that treats a port and a crop field identically.
At baseline, both systems are energy-constrained and the Scientific Goodput gap is narrow (13.21 vs.\
13.42\,M\,SV/hr), as no-prior's higher execution rate (73.3\% vs.\ 65.0\%) partially
compensates for its indiscriminate selection.
Once energy is unlimited, execution rate equalization isolates selection quality as the
sole differentiator, and the 17\% Scientific Goodput gap emerges.

\niparagraph{Thermal management is a consequence of value-aware scheduling.}
At equal execution volume under infinite battery, \helios~runs 6.2\,$^{\circ}$C
cooler than no-prior (73.7 vs.\ 79.9\,$^{\circ}$C) and 7.3\,$^{\circ}$C cooler
than Priority (73.7 vs.\ 81.0\,$^{\circ}$C). This is not a separate thermal policy:
\helios's joint scheduling signal ($f_{\text{therm}}$ coupled with $f_{\text{batt}}$
and per-task ESV) naturally gates execution when temperature rises, pacing compute
load within the safe thermal band described in Section~\ref{sec:eval:battery}.
No-prior and Priority, lacking this signal, run unconstrained at full load whenever
energy permits, resulting in higher temperature and---for no-prior---active
suppression of future tasks as rising temperature pushes $f_{\text{therm}}$ upward,
contributing to its Scientific Goodput regression relative to baseline (12.36 vs.\ 13.42\,M\,SV/hr) under
infinite battery. This thermal gap vanishes under the oracle (all systems reach
35.0\,$^{\circ}$C), confirming it is a product of scheduling policy under load,
not hardware asymmetry.

\niparagraph{Energy management is Priority's binding policy gap, not compute.}
Providing Priority with instant CPU yields zero Scientific Goodput improvement (805\,M in both
cases); mean SoC decreases from 0.170 to 0.154, as faster processing accelerates
energy drain with no mechanism to moderate pace. This directly confirms the
diagnosis from Section~\ref{sec:eval:battery}: Priority depletes to SoC$_c$ before
exhausting its compute budget, leaving available cycles idle at the energy floor.
The contrast with \helios's instant CPU result is informative: \helios~gains 6\%
Scientific Goodput (13.21$\to$14.03\,M\,SV/hr) and 10.6\,pp in execution rate (65.0$\to$75.6\%), because its
battery barrier preserves operating headroom that additional compute can actually
fill. Compute relaxation is productive only when energy management prevents the
system from reaching the floor in the first place.

\niparagraph{Free ISL enables load balancing, but amplifies rather than substitutes
for value awareness.}
Unrestricted ISL connectivity allows satellites to route images to any neighbor at
no cost, enabling optimal geographic load distribution across the fleet. For
\helios, this raises execution rate from 65.0\% to 89.9\% and event coverage from 72.6\% to
90.7\%. The no-prior variant sees smaller gains (73.3$\to$82.1\% execution rate, 15.03 vs.\
16.61\,M\,SV/hr Scientific Goodput), despite identical ISL conditions: the Scientific Goodput gap reflects \helios's
ability to route selectively by task value, concentrating high-probability images on
the satellites best positioned to process them. Priority has no ISL capability and therefore has no +Free ISL entry.
Science Load Balance approaches 99.8\% under all free-ISL
and oracle configurations, confirming that per-satellite load imbalance in baseline
systems arises from geographic eclipse asymmetry, which free ISL eliminates. In the
operational setting where ISL bandwidth carries real cost, value-aware routing
determines how much of this balancing potential is realized without sacrificing
selection quality.


\section{Sensitivity Analysis}
\label{app:sensitivity}

This section complements the main ablation studies with sensitivity analysis complementing system design exploration.

\subsection{Barrier Coefficient \texorpdfstring{$\beta$ Robustness}{Beta Robustness}}
\label{app:sensitivity:beta}

The barrier coefficient $\beta$ dictates cost sensitivity to battery depletion. We sweep $\beta$ across a $100\times$ range ($10^{-4}$ to $10^{-2}$). As Figure~\ref{fig:beta_sweep_app} demonstrates, Scientific Goodput varies by only $\pm$10\% across the full range. The mechanism is self-correcting: an artificially high $\beta$ raises execution costs earlier, forcing tasks out, which safely increases SoC headroom (mean fleet SoC rises from 0.26 to 0.34). This elevated SoC lowers $f_\text{batt}$ and moderates further cost escalation. The system achieves stable performance near the default ($\beta = 10^{-3}$) without brittle parameter tuning.

\begin{figure}[t]
\centering
\includegraphics[width=0.85\columnwidth]{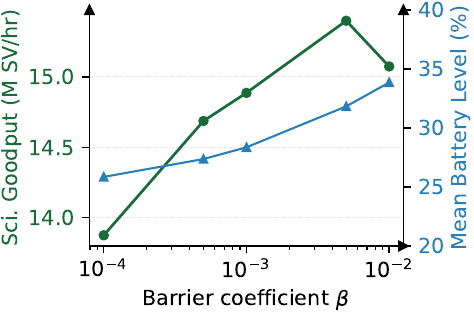}
\caption{Sensitivity to barrier coefficient $\beta$ (24h, 4 tasks). Scientific Goodput varies by $\pm$10\% across a 100$\times$ range. Higher $\beta$ safely raises mean battery level (right axis) due to a self-correcting cost feedback loop.}
\label{fig:beta_sweep_app}
\Description{Line plot of Scientific Goodput and mean battery level versus barrier coefficient beta on a log scale.}
\end{figure}

\subsection{ISL Link Failure Robustness}
\label{app:sensitivity:isl_failure}

To test graceful degradation to ISL failures, we inject independent link failures uniformly across the constellation. Figure~\ref{fig:isl_failure_app} shows Scientific Goodput remains stable ($<$0.1\% variation) up to 50\% link failure. Because ISL offloading handles only $\approx$12\% of total traffic and dense LEO topology provides 8--12 original neighbors per satellite, 50\% connectivity loss leaves sufficient path redundancy for nominal operations. Performance degrades significantly ($-34\%$ Scientific Goodput) only under complete ISL loss ($100\%$), proving the system relies on offloading for optimization but not baseline survival.

\begin{figure}[t]
\centering
\includegraphics[width=0.85\columnwidth]{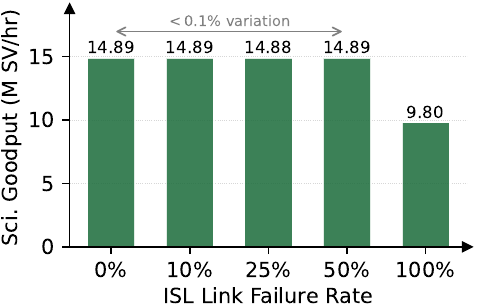}
\caption{Robustness to ISL link failures. Scientific Goodput remains stable through 50\% link failure. Significant degradation ($-34\%$ Scientific Goodput) occurs only at complete ISL loss.}
\label{fig:isl_failure_app}
\Description{Bar chart of Scientific Goodput versus ISL failure rate from 0 percent to 100 percent.}
\end{figure}

\section{Formal Properties}
\label{app:formal}

Section~\ref{sec:system:dropout} derives the critical SoC at which each task is excluded from execution (Eq.~\ref{eq:dropout_soc}) and claims that this threshold is monotonically decreasing in ESV, producing value-ordered dropout. We prove this claim below, then establish that the greedy clearing algorithm is exactly optimal under the uniform-compute assumption.

\subsection{Staggered Dropout: Proof of Value-Ordered Exit}
\label{app:dropout_proof}

\begin{theorem}[Value-ordered dropout]
\label{thm:dropout}
Let tasks $a$ and $b$ have expected science values $\mathrm{ESV}_a < \mathrm{ESV}_b$, both strictly greater than the base execution cost $P_{\mathrm{base}}$. Under the barrier cost function $f_{\mathrm{batt}}(\mathrm{SoC}) = 1 + \beta / (\mathrm{SoC} - \mathrm{SoC}_c)^2$ with $\beta > 0$, the critical SoC values satisfy $\mathrm{SoC}_a^* > \mathrm{SoC}_b^*$. That is, the lower-value task is excluded at a strictly higher state of charge.
\end{theorem}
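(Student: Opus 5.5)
The plan is to make the critical state of charge explicit as a function of the task's value and then show that this function is strictly decreasing. Throughout, consider only the battery-driven regime: the thermal and queue factors are held at their nominal value $1$, exactly as in the derivation of Eq.~\ref{eq:dropout_soc}. Equivalently, they can be absorbed into an effective base cost that is common to both tasks, since both are evaluated at the same hardware state.

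First I will pin down what $\mathrm{SoC}^*_v$ means for a task of value $v>P_{\mathrm{base}}$. It is the state of charge above $\mathrm{SoC}_c$ at which the admission condition $P=v$ switches. On the domain $\mathrm{SoC}>\mathrm{SoC}_c$, the map $\mathrm{SoC}\mapsto P(\mathrm{SoC})=P_{\mathrm{base}}\bigl(1+\beta/(\mathrm{SoC}-\mathrm{SoC}_c)^2\bigr)$ is continuous and strictly decreasing, since $\beta>0$. As $\mathrm{SoC}\to\mathrm{SoC}_c^+$ it tends to $+\infty$, and as $\mathrm{SoC}\to\infty$ it tends to $P_{\mathrm{base}}$. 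Because $v>P_{\mathrm{base}}$, the intermediate value theorem together with strict monotonicity gives a unique crossing point. Moreover, the task is admitted exactly when $\mathrm{SoC}>\mathrm{SoC}^*_v$, so "exclusion" is well-defined as the event that the state of charge drops below this threshold.

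Next I solve $P(\mathrm{SoC})=v$ in closed form. This gives $(\mathrm{SoC}-\mathrm{SoC}_c)^2=\beta P_{\mathrm{base}}/(v-P_{\mathrm{base}})$, and taking the positive root recovers Eq.~\ref{eq:dropout_soc}, namely $\mathrm{SoC}^*_v=\mathrm{SoC}_c+\sqrt{\beta P_{\mathrm{base}}/(v-P_{\mathrm{base}})}$. The positivity hypothesis $v>P_{\mathrm{base}}$ is exactly what makes the radicand positive and finite.

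For the comparison, take $\mathrm{ESV}_a<\mathrm{ESV}_b$, both above $P_{\mathrm{base}}$. Then $0<\mathrm{ESV}_a-P_{\mathrm{base}}<\mathrm{ESV}_b-P_{\mathrm{base}}$. Since $\beta P_{\mathrm{base}}>0$, the reciprocals reverse the order, and because the square root is strictly increasing on $(0,\infty)$ this gives $\mathrm{SoC}^*_a>\mathrm{SoC}^*_b$. Alternatively, one can avoid the closed form and argue purely by monotonicity: $P(\mathrm{SoC}^*_b)=\mathrm{ESV}_b>\mathrm{ESV}_a=P(\mathrm{SoC}^*_a)$, and since $P$ is strictly decreasing this forces $\mathrm{SoC}^*_b<\mathrm{SoC}^*_a$. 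I would present this second version as the main argument because it is robust. Under it, any strictly decreasing barrier works, including one that also carries fixed thermal and queue multipliers.

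The only real obstacle is interpretive rather than computational. One must make precise that the other multiplicative factors are evaluated at a common state, and that $P_{\mathrm{base}}$ is positive, so that the ordering is not disturbed. I would state these conditions explicitly as standing assumptions; the inequality chain itself is routine.
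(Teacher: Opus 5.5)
Your proof is correct. Your first version (solve $P(\mathrm{SoC})=v$ for the closed form, then compare) follows the paper's route. The paper shows that $h(v)=\mathrm{SoC}_c+\sqrt{\beta P_{\mathrm{base}}/(v-P_{\mathrm{base}})}$ is strictly decreasing by computing $h'(v)=-\frac{1}{2}\sqrt{\beta P_{\mathrm{base}}}\,(v-P_{\mathrm{base}})^{-3/2}<0$. You use an elementary chain instead (reciprocals reverse the order, the square root preserves it); the two are interchangeable.

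The argument you designate as main is genuinely different. It never uses Eq.~\ref{eq:dropout_soc}; it reads the ordering directly off the strict monotonicity of $\mathrm{SoC}\mapsto P(\mathrm{SoC})$, via $P(\mathrm{SoC}^*_b)=\mathrm{ESV}_b>\mathrm{ESV}_a=P(\mathrm{SoC}^*_a)$. In the paper, that monotonicity-plus-intermediate-value reasoning appears only afterwards, in Corollary~\ref{cor:unique_crossing}, to show the crossing is unique. The theorem's proof itself simply takes the closed form as given.

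Folding the monotonicity argument into the theorem, as you do, has two benefits. It makes $\mathrm{SoC}^*$ well-defined before it is compared. It also shows that value-ordered exit depends only on the barrier being strictly decreasing, not on its inverse-square shape.

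What the closed-form route buys in return is quantitative information. The slope of $h$ decays like $(v-P_{\mathrm{base}})^{-3/2}$. That is why the three high-value thresholds in Table~\ref{tab:dropout_soc} bunch within half a percentage point of SoC, while the monitor task sits about four points higher.

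One caveat on your generality remark. Once you absorb fixed thermal and queue multipliers $c>0$ into an effective base cost, the thresholds exist only if $\mathrm{ESV}>c\,P_{\mathrm{base}}$, not merely $\mathrm{ESV}>P_{\mathrm{base}}$. This is the positive-denominator condition in Corollary~\ref{cor:queue_shift}. For a general strictly decreasing barrier, the value must lie in the range of $P$. Your comparison step presupposes that both thresholds exist, so state that hypothesis explicitly.
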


\begin{proof}
From Eq.~\ref{eq:dropout_soc}, the critical SoC for task $a$ is:
\[
  \mathrm{SoC}_a^* = \mathrm{SoC}_c + \sqrt{\frac{\beta \cdot P_{\mathrm{base}}}{\mathrm{ESV}_a - P_{\mathrm{base}}}}
\]

Define $h(v) = \mathrm{SoC}_c + \sqrt{\beta P_{\mathrm{base}} / (v - P_{\mathrm{base}})}$ for $v > P_{\mathrm{base}}$. We show $h$ is strictly decreasing. The derivative is:
\[
  h'(v) = -\frac{1}{2} \cdot \frac{\sqrt{\beta P_{\mathrm{base}}}}{(v - P_{\mathrm{base}})^{3/2}} < 0
\]

since $\beta, P_{\mathrm{base}} > 0$ and $v > P_{\mathrm{base}}$. Therefore $\mathrm{ESV}_a < \mathrm{ESV}_b$ implies $h(\mathrm{ESV}_a) > h(\mathrm{ESV}_b)$, i.e., $\mathrm{SoC}_a^* > \mathrm{SoC}_b^*$.
\end{proof}

\begin{corollary}[Uniqueness of crossing point]
\label{cor:unique_crossing}
Each task $a$ has exactly one SoC value at which $P = \mathrm{ESV}_a$. For $\mathrm{SoC} > \mathrm{SoC}_a^*$ the task is admitted; for $\mathrm{SoC} < \mathrm{SoC}_a^*$ it abstains. There is no oscillation.
\end{corollary}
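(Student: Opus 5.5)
The plan is to treat the corollary as a statement about a single real function of one variable. With thermal and queue factors held at their nominal value of one (the same regime in which Eq.~\ref{eq:dropout_soc} and Theorem~\ref{thm:dropout} are stated), the cost restricted to the admissible domain $\mathrm{SoC} > \mathrm{SoC}_c$ is
\[
  P(s) = P_{\mathrm{base}}\left(1 + \frac{\beta}{(s-\mathrm{SoC}_c)^2}\right).
\]
First I will show that $P$ is continuous and strictly decreasing on $(\mathrm{SoC}_c, \infty)$. Its derivative is $P'(s) = -2\beta P_{\mathrm{base}}/(s-\mathrm{SoC}_c)^3 < 0$, since $\beta, P_{\mathrm{base}} > 0$.

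Next I will compute the limits at both ends of the interval. As $s \to \mathrm{SoC}_c^+$, $P(s) \to +\infty$, and as $s \to \infty$, $P(s) \to P_{\mathrm{base}}$ from above. The physically meaningful range is $s \le 1$, so I also need $\mathrm{SoC}_a^* \le 1$, which I will either assume or handle as a remark. On the full half-line, $P$ is therefore a continuous strictly decreasing bijection onto $(P_{\mathrm{base}}, \infty)$. Because $\mathrm{ESV}_a > P_{\mathrm{base}}$, which is the standing hypothesis of Theorem~\ref{thm:dropout}, the intermediate value theorem gives a crossing with $P(s) = \mathrm{ESV}_a$, and strict monotonicity makes it unique. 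Solving the equation explicitly recovers exactly $\mathrm{SoC}_a^*$ from Eq.~\ref{eq:dropout_soc}. Taking the positive root is forced because $s - \mathrm{SoC}_c > 0$; the negative root lies outside the domain.

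The admit/abstain dichotomy then follows from monotonicity. For $s > \mathrm{SoC}_a^*$ we have $P(s) < P(\mathrm{SoC}_a^*) = \mathrm{ESV}_a$, so the task is admitted. For $\mathrm{SoC}_c < s < \mathrm{SoC}_a^*$ we have $P(s) > \mathrm{ESV}_a$, so it abstains.

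The main obstacle is making "no oscillation" precise, since the admission rule by itself is a static threshold. I would interpret the claim to mean that the admission indicator $\mathbf{1}[P(s) < \mathrm{ESV}_a]$ is a monotone step function of SoC with exactly one switch point. It therefore cannot flip repeatedly as SoC moves monotonically, for example during a single eclipse drain. I would state explicitly that this is oscillation in SoC-space, not a claim about closed-loop dynamics. Closed-loop chattering near the threshold, where the task executes, SoC drops, and the task is rejected, is not excluded by this argument and would need hysteresis. I would also note that if the thermal and queue factors are nonconstant but fixed, they only rescale $P_{\mathrm{base}}$ to an effective $P_{\mathrm{base}} f_{\mathrm{thermal}} f_{\mathrm{queue}}$, and the same argument goes through with that effective base value.
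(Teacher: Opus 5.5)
Your proposal is correct and follows the same route as the paper's proof. Uniqueness comes from strict monotonicity of $P$ on $(\mathrm{SoC}_c,\infty)$, and existence from the limits $P\to\infty$ as $\mathrm{SoC}\to\mathrm{SoC}_c^+$ and $P\to P_{\mathrm{base}}$ as $\mathrm{SoC}\to\infty$ combined with the intermediate value theorem. Your extra remarks are sound refinements that the paper leaves implicit: the explicit restriction to $\mathrm{SoC}>\mathrm{SoC}_c$ (needed because the symmetric barrier would give a second crossing below $\mathrm{SoC}_c$), the caveat $\mathrm{SoC}_a^*\le 1$, and reading ``no oscillation'' as a single switch point in SoC rather than a closed-loop guarantee.
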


\begin{proof}
The marginal execution cost $P(\mathrm{SoC}) = P_{\mathrm{base}} \cdot [1 + \beta / (\mathrm{SoC} - \mathrm{SoC}_c)^2]$ is strictly decreasing on $(\mathrm{SoC}_c, \infty)$, since $f_{\mathrm{batt}}$ is strictly decreasing. A strictly monotone function crosses any horizontal line $\mathrm{ESV}_a$ at most once. Existence follows from the intermediate value theorem: $P \to \infty$ as $\mathrm{SoC} \to \mathrm{SoC}_c^+$ and $P \to P_{\mathrm{base}}$ as $\mathrm{SoC} \to \infty$, so for any $\mathrm{ESV}_a > P_{\mathrm{base}}$ there is exactly one crossing.
\end{proof}

\begin{corollary}[Queue-shifted invariance]
\label{cor:queue_shift}
When the deferred queue is non-empty ($q > 0$), the effective marginal execution cost becomes $P' = P \cdot (1 + \gamma q)$. The critical SoC shifts upward uniformly for all tasks:
\[
  \mathrm{SoC}_a^{*\prime} = \mathrm{SoC}_c + \sqrt{\frac{\beta \cdot P_{\mathrm{base}} \cdot (1 + \gamma q)}{\mathrm{ESV}_a - P_{\mathrm{base}} \cdot (1 + \gamma q)}}
\]
The monotonicity argument is identical (the numerator scales by a positive constant and the denominator remains positive for all tasks still available), so the dropout \emph{order} is preserved. Queue congestion makes all tasks exit earlier but does not alter which task exits first.
\end{corollary}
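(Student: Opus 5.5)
The plan is to reduce the corollary to Theorem~\ref{thm:dropout} by absorbing the queue factor into an effective base cost. For a fixed queue depth $q \ge 0$, define $\kappa := P_{\mathrm{base}}(1+\gamma q)$. Then $\kappa \ge P_{\mathrm{base}} > 0$, and the effective cost is $P'(\mathrm{SoC}) = \kappa\,[1 + \beta/(\mathrm{SoC}-\mathrm{SoC}_c)^2]$. This has exactly the form treated in Theorem~\ref{thm:dropout} and Corollary~\ref{cor:unique_crossing}, with $P_{\mathrm{base}}$ replaced by $\kappa$. The queue depth only enters through this positive constant and does not depend on SoC, so everything proved there transfers verbatim. (If the thermal factor is held fixed, it can be folded into $\kappa$ in the same way.)

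\textbf{Step 1: the closed form.} Write $x = \mathrm{SoC}-\mathrm{SoC}_c > 0$ and solve $\kappa(1+\beta/x^2) = \mathrm{ESV}_a$. This gives $\beta/x^2 = (\mathrm{ESV}_a-\kappa)/\kappa$, hence $x^2 = \beta\kappa/(\mathrm{ESV}_a - \kappa)$. Taking the positive root yields the displayed $\mathrm{SoC}_a^{*\prime}$. This is valid exactly when $\mathrm{ESV}_a > \kappa$, which is what ``tasks still available'' means. By Corollary~\ref{cor:unique_crossing} applied with $\kappa$, this root is the unique crossing: $P' \to \infty$ as $\mathrm{SoC}\to\mathrm{SoC}_c^+$ and $P' \to \kappa < \mathrm{ESV}_a$ as $\mathrm{SoC}$ grows, with strict monotonicity in between.

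\textbf{Step 2: upward shift.} View $g(\kappa, v) = \mathrm{SoC}_c + \sqrt{\beta\kappa/(v-\kappa)}$ on the region $v > \kappa > 0$. The radicand has a numerator that increases in $\kappa$ and a positive denominator that decreases in $\kappa$, so $g$ is strictly increasing in $\kappa$. Since $\kappa \ge P_{\mathrm{base}}$, with equality only when $q=0$, we get $\mathrm{SoC}_a^{*\prime} \ge \mathrm{SoC}_a^*$, strictly when $q>0$. This holds for every task, which is the sense of the claim that congestion makes all tasks exit earlier.

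\textbf{Step 3: order preservation.} For two tasks with $\kappa < \mathrm{ESV}_a < \mathrm{ESV}_b$, apply the derivative computation from the proof of Theorem~\ref{thm:dropout} with $P_{\mathrm{base}}$ replaced by $\kappa$: $\partial_v g = -\tfrac12\sqrt{\beta\kappa}\,(v-\kappa)^{-3/2} < 0$. Hence $\mathrm{SoC}_a^{*\prime} > \mathrm{SoC}_b^{*\prime}$.

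The main obstacle is the boundary case that the corollary's hypothesis glosses over: a task with $P_{\mathrm{base}} < \mathrm{ESV}_a \le \kappa < \mathrm{ESV}_b$. I would handle it by convention, setting $\mathrm{SoC}_a^{*\prime} = +\infty$, since $P' \ge \kappa \ge \mathrm{ESV}_a$ at every SoC and so $a$ is excluded everywhere. Under this convention the ordering $\mathrm{SoC}_a^{*\prime} > \mathrm{SoC}_b^{*\prime}$ still holds in the extended sense, so the value order of exit is preserved. I would state explicitly that the ``identical'' monotonicity argument is only literally valid on the set $\{\mathrm{ESV} > \kappa\}$.
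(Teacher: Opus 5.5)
Your proposal is correct and follows the same route as the paper. The paper justifies the corollary only by noting that replacing $P_{\mathrm{base}}$ with $P_{\mathrm{base}}(1+\gamma q)$ leaves the derivative argument of Theorem~\ref{thm:dropout} unchanged for tasks with $\mathrm{ESV}_a > P_{\mathrm{base}}(1+\gamma q)$. Your Step~2 (the threshold is monotone in $\kappa$) and your treatment of tasks with $\mathrm{ESV}_a \le \kappa$ supply details the paper only asserts, and you are right to read ``uniformly'' as ``for every task'', since the size of the shift depends on $\mathrm{ESV}_a$.
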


\subsection{Optimality of Greedy Task Selection}
\label{app:knapsack}

Section~\ref{sec:system:bidding} uses a greedy-by-value-density heuristic to select tasks. With $k$ constant compute tiers, the knapsack has at most $k$ distinct item weights and admits an exact DP solution in $O(nB)$ time. We show below that the greedy heuristic is itself exactly optimal in the common case.

\begin{proposition}[Exact optimality under uniform compute]
\label{prop:knapsack}
When all tasks consume the same compute $g_i = g$ for all $i$, the greedy-by-ESV algorithm solves the single-constraint knapsack \emph{exactly}.
\end{proposition}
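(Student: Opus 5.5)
The plan is to reduce the knapsack of Eq.~\ref{eq:knapsack} with uniform weights to a cardinality-constrained selection, and then apply a standard exchange argument. Under the uniform-compute assumption $g_i = g > 0$ for all $i$, the constraint $\sum_{i\in S} g_i \le G$ becomes $|S|\,g \le G$, which is equivalent to $|S| \le K := \lfloor G/g \rfloor$. The feasible sets are therefore exactly the subsets of size at most $K$ of the admitted tasks. Value density $\mathrm{ESV}_i/g$ orders tasks the same way as $\mathrm{ESV}_i$, so greedy-by-density and greedy-by-ESV coincide. Greedy sorts the admitted tasks by ESV in decreasing order, breaking ties arbitrarily. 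It takes items while they fit, so it returns the top $\min(K,n)$ tasks, call this set $S_g$.

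Next we argue that an optimal solution can be taken to have size exactly $\min(K,n)$. All admitted tasks satisfy $\mathrm{ESV}_i > P \ge P_{\mathrm{base}} > 0$ by the admission rule, so every ESV is positive. Adding any unused task to a feasible set of size below $K$ therefore strictly increases value. Hence some optimal set $S^\star$ has $|S^\star| = \min(K,n) = |S_g|$. This positivity step matters: if non-positive values were allowed, greedy would also need to stop at non-positive items. The admission filter rules that out, and I would say this explicitly.

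The core of the argument is the exchange. Among optimal sets of maximal size, pick $S^\star$ maximizing $|S^\star \cap S_g|$. If $S^\star \ne S_g$, choose $j \in S_g \setminus S^\star$ and $i \in S^\star \setminus S_g$. Because $S_g$ consists of the top-ranked items and $i$ was not selected, we have $\mathrm{ESV}_j \ge \mathrm{ESV}_i$. Set $S' = S^\star \setminus\{i\} \cup \{j\}$. Then $S'$ has the same size, so it is feasible, and its value is at least that of $S^\star$, so it is optimal. It also has strictly larger intersection with $S_g$, which contradicts the choice of $S^\star$. Therefore $S_g$ is optimal. Equivalently, one can note that the $r$-th largest element of any size-$m$ set is at most the $r$-th largest ESV overall and sum over $r$.

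The main obstacle is bookkeeping rather than depth. The points to get right are handling ties, stating that greedy scans the full sorted list (it never stops early, since all items have equal weight), and making the positivity and admission step explicit. I would close by remarking that this complements the $O(nB)$ DP mentioned above for the case of $k$ distinct tiers.
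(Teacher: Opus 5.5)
Your proposal is correct and takes essentially the paper's route: reduce the uniform-weight constraint to the cardinality bound $|S| \le \lfloor G/g \rfloor$, then use an exchange argument to show the top-ESV items are optimal. Your version just fills in what the paper's one-line sketch leaves implicit. That is tie-handling, the fact that ESVs are nonnegative (so some optimal set uses the full cardinality), and the maximal-intersection form of the exchange step.
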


\begin{proof}
With uniform item size $g$, the budget $G$ admits at most $m = \lfloor G/g \rfloor$ items. Selecting the $m$ highest-ESV tasks is optimal by a simple exchange argument: swapping any selected task for a non-selected one with lower ESV can only decrease total value.
\end{proof}

\niparagraph{Dual-constraint case.}
In practice, task selection enforces both a compute constraint ($\sum g_i \leq G$) and an energy constraint ($\sum e_i \leq E$). When both $g_i$ and $e_i$ are uniform across tasks, the dual constraint reduces to $|S| \leq \min(\lfloor G/g \rfloor, \lfloor E/e \rfloor)$, and greedy-by-ESV remains exactly optimal. When tasks select heterogeneous tiers, the problem has at most $k$ distinct weight classes; an exact solution is available via DP in $O(nB)$ time, but we observe that greedy-by-ESV-density matches DP in all evaluated configurations.

\section{Algorithms}
\label{app:algorithms}

Appendix~\ref{app:formal} establishes the theoretical properties of staggered dropout and greedy task selection. The pseudocode below shows how the runtime realizes them. All four procedures execute sequentially within each satellite's scheduling step; the full loop is given in Algorithm~\ref{alg:scheduling_step}.

\begin{algorithm}[t]
\caption{Per-Satellite Scheduling Step}
\label{alg:scheduling_step}
\DontPrintSemicolon
\KwIn{Hardware state (SoC, $T$), arriving images $\mathcal{I}$, deferred queue $\mathcal{D}$, ISL neighbors $\mathcal{N}$}
\KwOut{Executed images, updated $\mathcal{D}$}

\tcp{Step 1: Compute marginal execution cost (Eqs.~\ref{eq:spot_price}--\ref{eq:f_queue})}
$P \gets P_{\text{base}} \cdot f_{\text{batt}}(\text{SoC}) \cdot f_{\text{thermal}}(T) \cdot f_{\text{queue}}(|\mathcal{D}|)$\;

\tcp{Step 2: Collect and Assign Tasks}
$\mathcal{B} \gets \emptyset$\;
\ForEach{image $i \in \mathcal{I} \cup \textsc{RetryDeferred}(\mathcal{D}, P)$}{
  \ForEach{task $a$}{
    \If{$\mathrm{ESV}_a(i) > P$}{
      $\mathcal{B} \gets \mathcal{B} \cup \{(a, i, \mathrm{ESV}_a(i))\}$\;
    }
  }
}
$\mathcal{W}, \mathcal{L} \gets \textsc{GreedyKnapsack}(\mathcal{B}, G, E)$ \tcp*{Alg.~\ref{alg:clearing}}

\tcp{Step 3: Execute winners}
\ForEach{$(a, i, w) \in \mathcal{W}$}{
  Execute image $i$ for task $a$; update SoC, $T$\;
}

\tcp{Step 4: Manage deferred queue}
\ForEach{image $i$ without winning tasks}{
  \If{$i$ is fresh}{
    $\mathcal{D} \gets \mathcal{D} \cup \{(i, \text{TTL}=300\text{s})\}$\;
  }
}
Expire entries in $\mathcal{D}$ where TTL $\leq 0$\;

\tcp{Step 5: ISL Offloading}
$P^{\text{phys}}_\ell \gets P_{\text{base}} \cdot f_{\text{batt}}(\text{SoC}) \cdot f_{\text{thermal}}(T) \cdot f_{\text{queue}}(0)$\;
\ForEach{fresh-event loser $(a, i, w) \in \mathcal{L}$}{
  \textsc{ISLOffloading}$(i, P^{\text{phys}}_\ell, \mathcal{N})$ \tcp*{Alg.~\ref{alg:isl}}
}
\end{algorithm}

\begin{algorithm}[t]
\caption{Greedy Knapsack Task Selection}
\label{alg:clearing}
\DontPrintSemicolon
\KwIn{Bids $\mathcal{B} = \{(a_i, \text{img}_i, \text{ESV}_i)\}$, compute budget $G$, energy budget $E$}
\KwOut{Winners $\mathcal{W}$, losers $\mathcal{L}$}

Sort $\mathcal{B}$ by $\text{ESV}_i$ descending \tcp*{uniform $g_i$ $\Rightarrow$ rank by ESV}
$\mathcal{W} \gets \emptyset$;\ \ $G_{\text{rem}} \gets G$;\ \ $E_{\text{rem}} \gets E$\;
\ForEach{task-bid $(a_i, \text{img}_i, \text{ESV}_i)$ in sorted order}{
  \If{$g \leq G_{\mathrm{rem}}$ \textbf{and} $e \leq E_{\mathrm{rem}}$}{
    $\mathcal{W} \gets \mathcal{W} \cup \{(a_i, \text{img}_i, \text{ESV}_i)\}$\;
    $G_{\text{rem}} \gets G_{\text{rem}} - g$;\ \ $E_{\text{rem}} \gets E_{\text{rem}} - e$\;
  }
}
$\mathcal{L} \gets \mathcal{B} \setminus \mathcal{W}$\;
\Return $\mathcal{W}, \mathcal{L}$\;
\end{algorithm}

\begin{algorithm}[t]
\caption{ISL Offloading}
\label{alg:isl}
\DontPrintSemicolon
\KwIn{Image $i$, local execution cost $P^{\text{phys}}_\ell$, ISL neighbors $\mathcal{N}$}

$n^* \gets \textsc{None}$;\ \ $P^*_{\text{adj}} \gets P^{\text{phys}}_\ell$\;
\ForEach{neighbor $n \in \mathcal{N}$}{
  $P^{\text{phys}}_n \gets P_{\text{base}} \cdot f_{\text{batt,n}} \cdot f_{\text{thermal,n}} \cdot f_{\text{queue}}(0)$\;
  $P_{\text{adj}} \gets P^{\text{phys}}_n + c_{\text{link}}(d_n) + c_{\text{lat}}$\;
  \If{$P_{\mathrm{adj}} < P^*_{\mathrm{adj}}$}{
    $n^* \gets n$;\ \ $P^*_{\text{adj}} \gets P_{\text{adj}}$\;
  }
}
\If{$n^* \neq \textsc{None}$}{
  Forward image $i$ to satellite $n^*$\;
  Credit scientific value to originating satellite\;
}
\end{algorithm}

\end{document}